\newtheorem{theorem}{Theorem}
\theoremstyle{remark}
\newtheorem{assumption}{Assumption}
\newtheorem{remark}{Remark}
\begin{document}
\title{\Large CB-DSL: Communication-efficient and Byzantine-robust \\Distributed Swarm Learning on Non-i.i.d. Data}
\author{\normalsize Xin Fan$^{1}$,~\IEEEmembership{\normalsize Student Member,~IEEE}, Yue Wang$^2$,~\IEEEmembership{\normalsize Senior Member,~IEEE}, \\ Yan Huo$^{1}$,~\IEEEmembership{\normalsize Senior Member,~IEEE}, and Zhi Tian$^2,~\IEEEmembership{\normalsize Fellow,~IEEE}$\\
$^{1}$School of Electronics and Information Engineering, Beijing Jiaotong University, China\\
$^2$Department of Electrical \& Computer Engineering, George Mason University, USA
\\E-mails: \{yhuo,fanxin\}@bjtu.edu.cn, \{ywang56,ztian1\}@gmu.edu}

\maketitle

\vspace{-0.5in}
\begin{abstract}\vspace{-0.1in}
The valuable data directly collected by Internet of Things (IoT) devices in edge networks together with the resurgence of machine learning (ML) stimulate the latest trend of artificial intelligence (AI) at the edge. However, traditional ML and recent federated learning (FL) methods face major challenges including communication bottleneck, data heterogeneity and security concerns in edge IoT scenarios, especially when being adopted for distributed learning among massive IoT devices equipped with limited data and transmission resources. Meanwhile, the swarm nature of IoT systems is overlooked by most existing literature, which calls for new designs of distributed learning algorithms. In this paper, inspired by the success of biological intelligence (BI) of gregarious organisms, we propose a novel edge learning approach for swarm IoT, called  communication-efficient and Byzantine-robust distributed swarm learning (CB-DSL), through a holistic integration of AI-enabled stochastic gradient descent and BI-enabled particle swarm optimization. To deal with non-independent and identically distributed (non-i.i.d.) data issues and Byzantine attacks, {\color{black}a very small amount of} global data samples are introduced in CB-DSL and shared among IoT workers, which not only alleviates the local data heterogeneity effectively but also enables to fully utilize the exploration-exploitation mechanism of swarm intelligence. Further, we provide convergence analysis to theoretically demonstrate that the proposed CB-DSL is superior to the standard FL with better convergence behavior. In addition, to measure the effectiveness of the introduction of the globally shared dataset, we also 
{\color{black} evaluate the model divergence by deriving its upper bound, which is related to} 
the distance between the data distribution at local IoT devices and the population distribution for the whole datasets. Numerical results verify that the proposed CB-DSL outperforms the existing benchmarks in terms of faster convergence speed, higher convergent accuracy, lower communication cost, and better robustness against non-i.i.d. data and Byzantine attacks\footnote{\textcolor[rgb]{0.00,0.00,0.00}{Our code can be found at:\url{https://github.com/fuanxiyin/CB-DSL.git}.}}.
\end{abstract}
\vspace{-0.1in}
\begin{IEEEkeywords}\vspace{-0.1in}
Distributed swarm learning, federated learning, particle swarm optimization, non-i.i.d. data, convergence analysis, model divergence analysis.
\end{IEEEkeywords}

\section{Introduction}
With the vigorous growth of Internet of Things (IoT) and Internet of Vehicles (IoV), smart devices are becoming the workhorse at the edge of wireless networks beyond 5G (B5G) \cite{Wang2018Big}.
The valuable data directly collected by devices together with the resurgence of machine learning (ML) stimulate the latest trend of artificial intelligence (AI) at the edge of B5G networks, termed as edge learning or edge AI \cite{wang2019edge,shi2020communication}. When conventional ML techniques are applied for edge learning, they are typically deployed in a centralized mode, which hinges on a full collection of distributed local data from the edge IoT devices to a central node. Such a centralized learning approach can obtain high learning accuracy, but the raw data collection process not only consumes huge communication resources but also raises unwilling privacy exposure and severe security concerns~\cite{IoTprivacy}. Alternatively, federated learning (FL) has recently attracted great attention and resulted in fruitful attempts for learning-based applications among multiple distributed workers such as personal mobile phones, which allows distributed learning from local data without raw data exchange\cite{konevcny2016federated,mcmahan2017communication}.

Standard FL methods are originally designed for ideal learning settings and wireless environments, which however face several challenges when being adopted for distributed learning among massive edge IoT devices that are usually equipped with limited capability and resources.
As the number of model parameters goes very large in deep neural networks, transmission of all the local model updates in FL between IoT devices (working as local workers) and the parameter server (PS) incurs high communication overhead in edge networks.
Further, stochastic gradient descent (SGD) is widely applied for model training in FL \cite{mcmahan2017communication,qian2015efficient}, where independent and identically distributed (i.i.d.) data samples are assumed at local workers and transmission is assumed error-free in order to ensure unbiased estimates and good empirical performances \cite{gower2019sgd,bottou2010large}.
However, in edge IoT scenarios, local training data samples at different IoT workers turn to be statistically heterogeneous worker-by-worker, giving rise to the non-i.i.d. data issue that may considerably degrades the learning performance of standard FL methods, e.g., Federated Averaging (FedAvg)~\cite{li2020federated,zhao2018federated}.
In addition, gradient{\color{black}-based} algorithms are subject to local optimum traps in solving non-convex problems \cite{boyd2004convex,huo2016asynchronous,vlaski2021second}, such as when training neural networks with nonlinear activations. {\color{black}This issue is aggravated in distributed settings,
especially when local workers only collect small-volume data.} 
Last but not the least, standard FL performs well in attack-free network settings, but is vulnerable to Byzantine attacks that may exist in practical edge networks~\cite{yang2020adversary,fan2022bev,yin2018byzantine,Xin2022iccws}.

Although some of the aforementioned challenges have been recently investigated in the literature of FL for edge networks and IoT applications \cite{khanna2020internet,lim2020federated,hussain2020machine}, they mainly focus on the modification and customization of the standard FL techniques, which however largely neglect some important and unique characteristics of IoT devices in edge networks.
Such unique characteristics include the large population of devices for many IoT applications, limited communication bandwidth available in edge networks, and non-i.i.d. local data with small data volume at individual IoT workers.
By ignoring these characteristics, existing efforts on edge learning fail to consider these limitations in the learning algorithm design for edge IoT systems, {\color{black}which results in learning performance degradation of FL applied to practical IoT edge networks}.
On the other hand, biological organisms in nature have demonstrated swarm intelligence with superior strength in collectively processing information, making decisions, dealing with uncertainties, adapting to environment changes, and recovering from errors and failures, even though they are individually weak.
All these attributes of biological intelligence (BI) are desired by IoT edge learning systems.
Notably, bio-inspired swarm optimization techniques are good at collaboratively finding the globally optimal solutions to complex optimization problems thanks to their built-in exploration-exploitation mechanism in swarms, but their convergence speed is typically slow \cite{selvaraj2014survey,almufti2019taxonomy}.

Motivated to bridge these gaps, this paper leverages both AI and BI to develop a communication-efficient and Byzantine-robust distributed swarm learning (CB-DSL) approach, by reformulating the bio-inspired particle swarm optimization (PSO) problem as a distributed learning problem with non-i.i.d. local data {\color{black}and in the presence of malicious attacks}.
For non-convex problems, by taking advantage of the exploration-exploitation mechanism of PSO\cite{eberhart1995new,kennedy1995particle}, our CB-DSL solutions have an increased chance to jump out of local optimum traps via swarm intelligence.  
For the communication bottleneck challenge, our CB-DSL only requires the best worker having the 
minimum loss function value to upload its local model to the PS, which thus dramatically reduces the communication overhead and energy consumption in edge networks.
To alleviate the non-i.i.d. data issue, we propose to introduce a small{\color{black}-volume} global dataset that is shared among all local workers for dual purposes. 
A part of this {\color{black}globally shared dataset} 
is used 
for training, whose effectiveness in relieving the non-i.i.d. problem is evaluated through the model divergence analysis.
The other part of the global dataset is used {\color{black}to calculate the fair-value loss} for scoring the local models. 
{\color{black}It} helps to identify the per-worker best model for best worker selection{\color{black}, and enables} to
verify the uploaded local model by which the PS can screen Byzantine attackers. 
Our main contributions are summarized as follows.
\begin{itemize}
\item We propose a new CB-DSL framework {\color{black}by developing a
holistic integration of AI-driven SGD and BI-driven PSO}, to {\color{black}effectively} 
handle the high communication costs, non-i.i.d. issues, non-convex problems and Byzantine attacks without sacrifice convergence speed, which cannot be achieved by SGD or PSO alone.
    {\color{black}CB-DSL offers a new paradigm of efficient and robust edge learning tailored}  
    for massive smart IoT devices in edge networks, which brings the 
    {\color{black}benefits of swarm intelligence} to broad applications of distributed learning.
 \item From {\color{black}the} theoretical point of view, we {\color{black}are the first one to systematically analyze the combination of FL and PSO, by deriving a}  closed-form expression {\color{black}to quantify the 
 expected convergence rate achieved} by our CB-DSL. Our analytical results 
     not only reflect the impact of different settings and parameters of our CB-DSL on the performance of edge learning among distributed workers, but also indicate that our CB-DSL outperforms the standard FL methods such as FedAvg {\color{black}in terms of better convergence rate}. 
 \item We further investigate the non-i.i.d. data issue at distributed workers by providing a model divergence analysis to evaluate how the introduction of a globally shared dataset improves the learning performance of our CB-DSL. 
     {\color{black}Our theoretical result reveals that the model divergence is subject to an upper bound, 
     which is decided by} 
     the earth mover's distance (EMD) between the data distribution at local workers and the population distribution for the whole datasets. 
 \item Through comprehensive experiments, we test the proposed CB-DSL approach in solving image classification problems by using the MNIST dataset. Simulation results show that our CB-DSL outperforms the benchmark methods in terms of achieving {\color{black}the highest} 
     testing accuracy with {\color{black}the fastest} 
     convergence under both the i.i.d. and non-i.i.d. cases and even in the presence of Byzantine attacks.
\end{itemize}

The rest of this paper is organized as follows. Section II reviews the related work. The problem formulation of distributed learning and the framework of CB-DSL technique are systematically presented in Section III, where we develop the CB-DSL algorithm.
The 
expected convergence rate and the model divergence analysis of the CB-DSL technique are studied  
in Section IV and Section V, respectively. 
Section VI presents simulation results and comparison of the CB-DSL technique with the benchmark methods, 
followed by conclusions in Section VII.

\emph{Notations:} Bold upper and lower case letters denote matrices and vectors, respectively. Euclidean norm of a vector or a matrix is depicted as $\|\cdot\|$.  The expectation and the first order derivative are represented by $\mathbb{E}$ and $\nabla$. {\color{black}$\langle\cdot , \cdot\rangle$ calculates the inner product of two vectors.} The probability of an event $y=c$ is expressed as $p(y=c)$. The event indicator is symbolled as  $\mathds{1}_{y=c}$, which is equal to $1$ when $y=c$, or $0$ otherwise.

\section{Related Work}\label{Sec:Related}

Various methods have been proposed in addressing the communication challenges of FL, such as sparsification\cite{aji2017sparse,lin2018deep},
quantization\cite{liu2019decentralized,seide20141,alistarh2017qsgd} and infrequent uploading of local updates \cite{liu2019communication,xu2019coke,xu2018energy,xu2021coke}. Theses methods aim to reduce the amount of the communication overhead, by either compressing or dropping some non-informative transmissions. 
These strategies are investigated predominantly for FL over digital channels based on the orthogonal transmission resource allocation among different local workers. 
Recently, a promising technique for tackling the communication bandwidth bottleneck emerges in the form of FL over the air \cite{yang2020federated,Xin2022icc,fan2021joint},
which exploits the fact that the model aggregation operation in FL matches the waveform superposition property of the wireless analog multi-access channels. 
{\color{black}It can further incorporate other efficiency-enhancing strategies for effectively reducing bandwidth consumption.} 
For instance,
communication-efficient FL over the air is developed by combining compression, quantization and concurrent transmission through 1-bit compressive sensing and analog aggregation transmission in \cite{fan2021communication,fan20211}. 
Nevertheless, the aforementioned compression and transmission strategies still require all participating workers to
exchange some variants of
their local updates,
which are not tailored for edge IoT systems and may result in tremendous communication costs and energy consumption in edge networks with massive IoT devices.

To take advantage of the swarm biological intelligence of animal flocks, particle swarm optimization (PSO) has been developed to solve complicated optimization problems without invoking {\color{black}the assumption on convexity~\cite{eberhart1995new,kennedy1995particle}.}
Recently,
there are few attempts of applying the PSO ideas to improve
machine learning performance.
In the centralized setting, PSO is used to optimize the solution and hyperparameters of convolutional neural networks (CNNs) for enhanced recognition accuracy of image classification\cite{syulistyo2016particle,junior2019particle, serizawa2020optimization}. 
In the distributed setting, 
two relevant works are found in attempting to integrate PSO into FL to improve FL performance \cite{qolomany2020particle, park2021fedpso}.
In \cite{qolomany2020particle}, 
FL is used for learning, while PSO is simply applied to search the optimal hyperparameters.
Different from \cite{qolomany2020particle}, our work focuses on the design of distributed training algorithm and model updating strategy for improving the performance and robustness {\color{black}of edge learning systems}. 
In \cite{park2021fedpso}, PSO and FL are combined in a simplistic manner for the idealized distributed settings with i.i.d. data and no attacks, which cannot be guaranteed for practical edge IoT systems.
Further, the work \cite{park2021fedpso} actually builds on an implicit assumption that a common loss function is available to all distributed workers, which trivializes the assessment of the globally best model. However, in distributed learning problems, loss function is only partially observable at local workers, which is data-dependent and hence different across workers. Thus, the method in \cite{park2021fedpso} is not suitable to edge IoT systems with 
data of small volume at local workers.
More importantly, there has not been any work on theoretical analysis for 
performance evaluation and convergence guarantee {\color{black}of distributed learning by connecting PSO with FL}.
{\color{black}To fill these identified technical gaps, in the next sections, we develop a novel efficient and robust edge learning algorithm through a holistic integration of AI-driven SGD and BI-driven PSO and empowered by using a small-volume global dataset,
whose advantages are verified by rigorous convergence analysis, model divergence evaluation, and experiments on real data.}  

\section{Distributed Swarm Learning}\label{sec:Model}
This section starts with the problem statement for distributed learning and the formulations of FL and PSO techniques. 
Then, the pros and cons of FL and PSO motivate us to bridge distributed learning with swarm optimization techniques to make the best use of both artificial and biological intelligence.
In particular, we focus on {\color{black}a systematical integration of FL and PSO for} 
a novel communication-efficient and Byzantine-robust edge learning algorithm with non-i.i.d. local data {\color{black}and in the presence of Byzantine attacks}.

Consider a distributed 
learning model with one parameter server (PS) and $U$ IoT workers,
where $U$ is very large but each worker has data of small volume 
in edge IoT scenarios.
Assume that each worker has $K_i$ data samples in its local dataset $\mathfrak{D}_i$, with
$|\mathfrak{D}_i|=K_i$, and
$i=1,\dots,U$. Denote $(\mathbf{x}_{i,k},y_{i,k})$ as the $k$-th data sample of the $i$-th local worker. Let $f(\mathbf{w};\mathbf{x}_{i,k},y_{i,k})$ represent the loss function associated with each data sample $(\mathbf{x}_{i,k},y_{i,k})$, where $\mathbf{w}=[w^1, \ldots, w^D]$ of size $D$ consists of the parameters of a common learning model.
The corresponding population loss function for the whole datasets $\mathfrak{D}$ and that for the local dataset $\mathfrak{D}_i$ of the $i$-th worker are denoted as $F(\mathbf{w}):=\mathbb{E}_{\mathfrak{D}}[f(\mathbf{w};\mathbf{x}_{i,k},y_{i,k})]$ and $F_i(\mathbf{w}):=\mathbb{E}_{\mathfrak{D}_i}[f(\mathbf{w};\mathbf{x}_{i,k},y_{i,k})]$, respectively, where $\mathfrak{D}=\bigcup_i \mathfrak{D}_i$.
For distributed learning, local workers collaboratively learn $\mathbf{w}$ by minimizing
\begin{equation}\label{eq:lossfopt}
 \textbf{P1:} \quad \mathbf{w}^*_i= \arg \min_{\mathbf{w}_i}  \, F_i(\mathbf{w}_i), \quad \;\,\
 \mbox{s.t.},  \quad \mathbf{w}_i = \mathbf{z},\quad \forall i,
\end{equation}
where $\mathbf{z}$ is an auxiliary variable to enforce consensus through collaboration among distributed local workers.


\subsection{Federated Learning}
For standard FL designed in ideal learning settings and network environments \cite{konevcny2016federated},
the minimization of $F_i(\mathbf{w})$ is typically carried out by the stochastic gradient descent (SGD) algorithm\cite{konevcny2016federated,mcmahan2017communication}, 
where local workers iteratively update their local models in FL as
\begin{align}\label{eq:localupdate0}
    \mathbf{w}_{i,t+1} = \mathbf{w}_{i,t}-\textstyle\frac{\alpha}{U} \textstyle\sum_{j=1}^{U}\nabla F_j(\mathbf{w}_{t};\mathbf{x}_{j,k},y_{j,k}),
\end{align}
where $\alpha$ is the learning rate and $\nabla F_j(\mathbf{w}_{t};\mathbf{x}_{j,k},y_{j,k})= \mathbb{E}_{\mathfrak{D}_j}\left[\frac{\sum_{\mathfrak{B}_j}\nabla f(\mathbf{w}_{t};\mathbf{x}_{j,k},y_{j,k})}{|\mathfrak{B}_j|}\right]$ is the local gradient computed at each local worker using its randomly selected mini-batch $\mathfrak{B}_j \subset \mathfrak{D}_j$ with the mini-batch size $|\mathfrak{B}_j|$.

Note that \eqref{eq:localupdate0} is the mathematical illustration of the iterative local model update, whereas the second term of global gradient averaging therein is typically implemented at the PS and then sent back to local workers. Hence, communications take place in every iteration until convergence, during which the communication overhead to acquire the sum of all $U$ local gradients in \eqref{eq:localupdate0} would be huge especially when $U$ and $D$ are large.
Moreover, for complicated non-convex problems, distributed gradient-based FL solutions may converge to undesired local optima and there is unfortunately a lack of effective mechanisms to escape these traps.

\subsection{Particle Swarm Optimization}
As a bio-inspired algorithm, PSO is a stochastic optimization approach based on the movement of particles (workers) and the collaboration of swarms to iteratively and cooperatively search for an optimal solution to general optimization problems \cite{eberhart1995new,kennedy1995particle}. 
{\color{black}Note that PSO is originally designed for optimization problems instead of learning problems with distributed data. In this sense, the loss function in PSO is assumed to be globally common to all particles, i.e., $F_i(\cdot) = F(\cdot),  \forall i$ in the problem \textbf{P1} in \eqref{eq:lossfopt}. This is however not the case in distributed leaning where $F_i(\cdot)$ is data-dependent and different worker-by-worker, which will be explained in the next subsection.}

In PSO, a swarm consists of a large set of particles, $i=1,2...,U$. At the current iteration, the position $\mathbf{w}_{i,t}$ of each particle $i$ presents a possible solution to the problem, and meanwhile the velocity $\mathbf{v}_{i,t}$ of each particle $i$ denotes the updating direction for the next step.
To find the globally optimal value of $F(\cdot)$, particles collaborate with each other to update their velocities and positions in an iterative manner
\begin{align}
\mathbf{v}_{i,t+1}&=c_0 \mathbf{v}_{i,t}+c_1(\mathbf{w}_{i,t}^p-\mathbf{w}_{i,t})+c_2 (\mathbf{w}_t^g-\mathbf{w}_{i,t}), \label{eq:PSO_v}\\
\mathbf{w}_{i,t+1}&=\mathbf{w}_{i,t}+\mathbf{v}_{i,t+1}, \label{eq:PSO_w}
\end{align}
where the velocity is updated as a combination of three sub-directions: inertia $\mathbf{v}_{i,t}$ of the previous updating direction, individual direction towards each particle's own historical best parameter $\mathbf{w}^p_{i,t} {=} \arg\!\min_{\tau{=}1,{\cdots},t} F(\mathbf{w}_{i,\tau})$, and social direction towards the globally best parameter found by the entire swarm $\mathbf{w}^g_{t} {=} \arg\! \min_{i{=}1,{\cdots},U} F(\mathbf{w}^p_{i,t})$.
Among the corresponding three weights, the inertia weight $c_0$ is a positive number, while $c_1$ and $c_2$ are positive and random (say, uniformly distributed as $c_1{\sim} \mathcal{U}(0, \delta_{c_1})$, and $c_2{\sim} \mathcal{U}(0, \delta_{c_2})$) for stochastic optimization.
Notably, the weighted combination of the three sub-directions in \eqref{eq:PSO_v} serves a mechanism for exploration-exploitation tradeoffs, where $c_0$ is set to be linearly decreasing over iterations to tune the solution search process from exploration to exploitation, and $c_1$ and $c_2$ indicate the random exploration level at individual particles and the exploitation level in swarm, respectively.

\subsection{Communication-efficient and Byzantine-robust Distributed Swarm Learning} 
A major {\color{black}challenge} 
from optimization problems to learning problems with distributed data is the lack of a common $F(\cdot)$ for global assessment, which however becomes $F_i(\cdot; \mathfrak{D}_i)$ dependent on local dataset $\mathfrak{D}_i$ in distributed learning.
{\color{black}Facing this challenge, we first introduce a very small amount of
global dataset\footnote{For the implementation point of view, a small amount {\color{black}(e.g., 1\% of all datasets is adequate as used} in our simulations) of globally shared dataset can be generated by a generative adversarial network module for keeping the privacy of workers' own local data\cite{wang2017generative}, which can be either pre-stored in the IoT devices or broadcasted from the PS to all the local workers. {\color{black}The required resources in sharing and local storage are quite low.}}:} $\mathfrak{D}^G=\mathfrak{D}^G_{tr}\cup\mathfrak{D}^G_{sc}$  to be shared by all workers{\color{black}, and then propose} a novel edge learning framework called communication-efficient and Byzantine-robust distributed swarm learning (CB-DSL).
The CB-DSL algorithm is implemented in \textbf{Algorithm~\ref{alg:policyforCBFedPSO}}, and schematically illustrated through the following iterative model updating steps.

At the local workers $i = 1, \cdots, U$, the model parameters are updated in a way of integrating BI-enabled PSO with AI-enabled SGD
\begin{equation}\label{eq:DSL_w}
 \mathbf{w}_{i,t+1}=\mathbf{w}_{i,t}+ \underbrace{c_0 \mathbf{v}_{i,t}+c_1(\mathbf{w}_{i,t}^p-\mathbf{w}_{i,t})+c_2 (\mathbf{w}_t^g-\mathbf{w}_{i,t})}_{\mathbf{BI}}{\color{black}\underbrace{-\alpha\nabla F_i(\mathbf{w}_{i,t};\mathfrak{D}_i\cup\mathfrak{D}^G_{tr})}_{\mathbf{AI}}}\, ,
\end{equation}
where $\mathfrak{D}^G_{tr}$  is a part of the global dataset $\mathfrak{D}^G$ and used for training to relieve the non-i.i.d. problem.
Thanks to the combination of the gradient-free stochastic optimization of the BI term and the gradient-based learning technique of the AI term in \eqref{eq:DSL_w}, the workers are good at searching for the optimal solutions to complex problems with fast convergence.

Then, the local workers calculate their own historical minimum loss function values and maintain their own historical best model parameters
\begin{equation}
  \label{eq:DSL_wp}
 \{F_{i,t+1}^p,\mathbf{w}^p_{i,t+1}\} = \arg\!\!\!\!\!\!\min_{\tau=1,\cdots,t+1} F_i(\mathbf{w}_{i,\tau}, \mathfrak{D}^G_{sc}),
\end{equation}
where $\mathfrak{D}^G_{sc}$ is the other part of the global dataset $\mathfrak{D}^G$ and {\color{black}used to provide fair-value scores of local models for best-worker selection by assessing}
the per-worker $F_{i,t+1}^p$ that helps to accurately identify $\mathbf{w}^p_{i,t+1}$. 
Then, all workers report their $F_{i,t+1}^p$ to the PS.

Comparing the received $\{F_{i,t+1}^p\}_i$ from all local workers, the PS selects the best worker $i_{t+1}^\star$ with the global optimum function value
\begin{equation}\label{eq:DSL_Fg}
 \{i_{t+1}^\star, F_{t+1}^g\} = \arg\!\!\!\!\min_{i=1,\cdots,U} F^p_{i,t+1}.
\end{equation}
If $F_{t+1}^g < F_{t}^g$, then the worker with the selected index $i_{t+1}^\star$ is invited to upload its $\mathbf{w}_{i_{t+1}^\star, t+1}^p$ to the PS as the globally best model parameter $\mathbf{w}^g_{t+1}= \mathbf{w}_{i_{t+1}^\star, t+1}^p$.
If $F_{t+1}^g \geq F_{t}^g$, then no worker is invited to upload local model parameter and the PS simply maintains the globally best model parameter and the globally best loss function value from the previous iteration as  $\mathbf{w}^g_{t+1}= \mathbf{w}^g_{t}$ and $F_{t+1}^g = F_{t}^g$.

Upon receiving $\mathbf{w}_{i_{t+1}^\star, t+1}^p$ from the invited worker, the PS further uses $\mathfrak{D}^G_{sc}$ to verify the reported model parameter.
If $F(\mathbf{w}_{i_{t+1}^\star, t+1}^p, \mathfrak{D}^G_{sc}) \neq F_{t+1}^g$, then a Byzantine attack is identified and the attacker is {\color{black}filtered out}; 
the PS will inquire the next best local worker, until confirmed.

{\em Communication Efficiency.}
Note that our CB-DSL requires $U$ workers to share their  function value $F^p_{i,t+1}$ which is only a scalar, and then invites only one local worker with the {\color{black}global minimum 
loss function value calculated using $\mathfrak{D}^G_{sc}$} to report its model parameter to the PS. Thus, our CB-DSL can dramatically reduce the overall communication overhead and energy consumption in edge networks during each communication round, compared with that required by standard FL approaches.

{\em Byzantine Robustness.}
In the process of collecting $F^p_{i,t+1}$'s from local workers, it is inherently vulnerable to Byzantine attacks. For example, a malicious worker may send a fake {\color{black}$\bar{F}^p_{i,t+1}$ $(< F^p_{i,t+1})$} to fool the PS to invite {\color{black}the attacker} 
to upload its fake model parameter as the global optimum, which will undermine edge learning.
Thanks to the introduction of $\mathfrak{D}^G_{sc}$ in our CB-DSL, it enables the PS to screen and remove the potential Byzantine attackers, {\color{black}resulting 
our CB-DSL 
Byzantine robust}.
%
\begin{algorithm}[!htb]
	\caption{CB-DSL}
	\label{alg:policyforCBFedPSO}
	\begin{algorithmic}[1]
\renewcommand{\algorithmicrequire}{\textbf{Initialization:}}
		\REQUIRE ~~\\
		$\mathbf{w}^p_{i,0}=\mathbf{w}_{i,0}$, $F^p_{i,0}=F_i(\mathbf{w}_{i,0}, \mathfrak{D}_{sc}^G)$, $\forall i$; 
\\
    \FOR {each iteration $t=1:T$}\vspace{0.08in}
    \STATE \!\!\!\!\textbf{at the local workers:}
        \STATE \hspace{0.1in} update the local model parameter $\mathbf{w}_{i,t+1}$ via \eqref{eq:DSL_w}; 
        \STATE  \hspace{0.1in}  calculate the per-worker historical minimum loss function value $F_{i,t+1}^p$ and maintain the corresponding per-worker historical best model parameter $\mathbf{w}_{i,t+1}^p$ via \eqref{eq:DSL_wp};
        \STATE  \hspace{0.1in}  send the scalar function value $F_{i,t+1}^p$ to the PS;
        \STATE   \hspace{0.1in} only the invited local worker sends its $\mathbf{w}^p_{i,t+1}$ to the PS;
    \vspace{0.1in}
    \STATE \!\!\!\!\textbf{at the PS:}
    \STATE \hspace{0.1in} {\color{black}compare} 
    the received $F^p_{i,t+1}$'s, select the best worker $i_{t+1}^\star$ and identify its function value as $F_{t+1}^g$ via \eqref{eq:DSL_Fg};
    \STATE \hspace{0.1in} {\em if} $F_{t+1}^g < F_{t}^g$, {\em then} invite the selected worker $i_{t+1}^\star$ to upload its model parameter as the globally best model parameter  $\mathbf{w}^g_{t+1}= \mathbf{w}_{i_{t+1}^\star, t+1}^p$;
    \STATE \hspace{0.1in} {\em else}, no worker is invited and maintain the globally best model parameter and function value {\color{black}from  the previous iteration} as $\mathbf{w}^g_{t+1}= \mathbf{w}^g_{t}$ and $F_{t+1}^g = F_{t}^g$;
%
    \STATE \hspace{0.1in}  given {\color{black} $\mathbf{w}_{i_{t+1}^\star, t+1}^p$ received from the invited worker}, verify $F(\mathbf{w}_{i_{t+1}^\star, t+1}^p, \mathfrak{D}^G_{sc})  == F_{t+1}^g$;
    \STATE \hspace{0.1in} {\em if} an attacker is identified by $F(\mathbf{w}_{i_{t+1}^\star, t+1}^p, \mathfrak{D}^G_{sc}) \neq F_{t+1}^g$, remove it and repeat line 8 until a legitimate worker is selected.
    \ENDFOR
	\end{algorithmic}
\end{algorithm}

\section{Convergence Analysis}\label{sec:Convergence}
In this section, 
we first make some definitions and assumptions for convergence analysis. With these preliminaries, the convergence behavior of our CB-DSL approach is theoretically evaluated {\color{black}by deriving 
an upper bound of the convergence rate.} 

\subsection{Assumption and Definition}
\begin{assumption}\label{ass1}
(Lipschitz continuity, smoothness): 
{\color{black}The gradient $\nabla F_i(\mathbf{w})$ of the loss function $F_i(\mathbf{w})$ at node $i$} is uniformly Lipschitz continuous with respect to $\mathbf{w}$, that is,
\begin{eqnarray}\label{eq:Lipschitz}
\|\nabla F_i(\mathbf{w}_{i,t+1})-\nabla F_i(\mathbf{w}_{i,t})\|\leq L\|\mathbf{w}_{i,t+1}-\mathbf{w}_{i,t}\|, \;\;\; \forall i, \mathbf{w}_{i,t}, \mathbf{w}_{i,t+1},
\end{eqnarray}
where $L$ is a positive constant, referred as the Lipschitz constant for the loss function $F_i(\cdot)$~\cite{chen2020joint}.
\end{assumption}

{\color{black}To facilitate analyses, we first rewrite $\mathbf{w}_{i,t}^p$ and $\mathbf{w}_t^g$ in \eqref{eq:DSL_w} as}
\begin{align}
\mathbf{w}_{i,t}^p&=\mathbf{w}_{i,t-1} {\color{black}+ \mathbf{v}_{i,t}^p,}\\
\mathbf{w}_t^g&=\mathbf{w}_{i,t-1} {\color{black}+ \mathbf{v}_{t}^g,}
\end{align}
where $\mathbf{v}_{i,t}^p$ and $\mathbf{v}_{t}^g$ denote the per-worker and globally optimal velocities currently used at the $i$-th worker. 

{\color{black}Then, the DSL velocity update $\mathbf{v}_{i,t+1} = \mathbf{BI}+\mathbf{AI} =  \mathbf{w}_{i,t+1} - \mathbf{w}_{i,t}$} in \eqref{eq:DSL_w} can be rewritten as
{\color{black}
\begin{align}\label{eq:DSL_velocity}
\mathbf{v}_{i,t+1} &=  c_0 \mathbf{v}_{i,t}+c_1(\mathbf{v}_{i,t}^p - (\mathbf{w}_{i,t}-\mathbf{w}_{i,t-1}) )+c_2 (\mathbf{v}_{t}^g - (\mathbf{w}_{i,t} - \mathbf{w}_{i,t-1})) - \alpha\nabla F_i(\mathbf{w}_{i,t})\nonumber\\
&=c_0 \mathbf{v}_{i,t}+c_1(\mathbf{v}_{i,t}^p-\mathbf{v}_{i,t})+c_2 (\mathbf{v}_{t}^g-\mathbf{v}_{i,t})-\alpha\nabla F_i(\mathbf{w}_{i,t})\nonumber\\
&=(c_0-c_1-c_2) \mathbf{v}_{i,t}+c_1\mathbf{v}_{i,t}^p+c_2 \mathbf{v}_{t}^g-\alpha\nabla F_i(\mathbf{w}_{i,t}),
\end{align}
}
where we replace $\nabla F_i(\mathbf{w}_{i,t};\mathfrak{D}_i\cup\mathfrak{D}^G_{tr})$ by $\nabla F_i(\mathbf{w}_{i,t})$ hereafter for symbol simplicity.

We use $\theta_{i,t}$, $\theta^p_{i,t}$, and $\theta^g_{t}$ to denote the angles {\color{black}between $\mathbf{v}_{i,t}$ and $-\nabla F_i(\mathbf{w}_{i,t})$, between $\mathbf{v}^p_{i,t}$ and $-\nabla F_i(\mathbf{w}_{i,t})$, and between $\mathbf{v}^g_{t}$ and $-\nabla F_i(\mathbf{w}_{i,t})$,} for any $i$ and $t$, respectively. Then we have
\begin{align}
\cos \theta_{i,t} &\triangleq \frac{{\color{black}\langle\mathbf{v}_{i,t}, -\nabla F_i(\mathbf{w}_{i,t})\rangle}}{\|\mathbf{v}_{i,t}\|\|\nabla F_i(\mathbf{w}_{i,t})\|}, \ \forall i, t,\\
\cos \theta^p_{i,t}&\triangleq \frac{{\color{black}\langle\mathbf{v}^p_{i,t}, -\nabla F_i(\mathbf{w}_{i,t})\rangle}}{\|\mathbf{v}^p_{i,t}\|\|\nabla F_i(\mathbf{w}_{i,t})\|}, \ \forall i, t,\\
\cos \theta^g_{t} &\triangleq \frac{{\color{black}\langle\mathbf{v}^g_{t}, -\nabla F_i(\mathbf{w}_{i,t})\rangle}}{\|\mathbf{v}^g_{t}\|\|\nabla F_i(\mathbf{w}_{i,t})\|}, \ \forall i, t.
\end{align}
We {\color{black}further} assume that the above cosine-similarity measures are bounded, whose lower and upper bounds are denoted as
\begin{align}\label{eq:desitascope}
\underline{q} &\leq  \cos \theta_{i,t}\leq \overline{q},\ \forall i,t \\
\underline{q}^p&\leq  \cos \theta^p_{i,t}\leq \overline{q}^p,\ \forall i, t\\
\underline{q}^g &\leq  \cos \theta^g_{t}\leq \overline{q}^g,\ \forall i,t,
\end{align}
%
\begin{align}\label{eq:denormscope1}
\underline{u} \leq& \frac{\|\mathbf{v}_{i,t}\|}{\|\nabla F_i(\mathbf{w}_{i,t})\|}\leq\overline{u}, \ \forall i, t, \\
\underline{u}^p\leq& \frac{\|\mathbf{v}^p_{i,t}\|}{\|\nabla F_i(\mathbf{w}_{i,t})\|}\leq \overline{u}^p,\ \forall i,t, \\
\underline{u}^g \leq& \frac{\|\mathbf{v}^g_{t}\|}{\|\nabla F_i(\mathbf{w}_{i,t})\|}\leq\overline{u}^g,\ \forall i,t.\label{eq:denormscope}
\end{align}


\subsection{Convergence Bound}
We adopt the expected improvement {\color{black}on the gradient in terms of its $\ell2$ norm, working as} 
an indicator of convergence for non-convex optimization\cite{wang2021cooperative,bernstein2018signsgd}
\begin{align}\label{eq:convergenceshow}
\min_{0,1,...,T}\mathbb{E}[\|\mathbf{g}_t\|^2]\leq\mathbb{E}\left[\sum_{t=1}^{T}\frac{1}{T}\|\mathbf{g}_t\|^2\right], 
\end{align}
where the norm of the gradient is expected to converge to 0 as $T$ increases to infinity, which means that the solution converges asymptotically.

With the assumptions and definitions presented in Subsection IV.A,
the convergence errors of the CB-DSL algorithm developed in Subsection III.C are bounded by the following \textbf{Theorem~\ref{theorem1}}. 

\begin{theorem}\label{theorem1}
For $T$ communication rounds, the expected convergence rate at each local worker in CB-DSL is bounded by
\begin{align}\label{eq:theorem1}
\mathbb{E}\left[\sum_{t=1}^{T}\frac{\|\nabla F_i(\mathbf{w}_{i,t})\|^2}{T}\right]\leq \frac{F(\mathbf{w}_{i,0})-F(\mathbf{w}^*)}{T\Phi_{E}},\ \forall i
\end{align}
where $\Phi_{E}={\color{black}\alpha-\frac{2c_0-\delta_{c_1}-\delta_{c_2}}{2}}\underline{q}\underline{u}
-\frac{\delta_{c_1}}{2} \overline{u}^p\overline{q}^p-\frac{\delta_{c_2}}{2} \overline{u}^g\overline{q}^g
-2L((c_0^2{\color{black}-\delta_{c_1}c_0-\delta_{c_2}c_0}+\frac{\delta_{c_1}^2}{3}
+\frac{\delta_{c_2}^2}{3} +\frac{\delta_{c_1}\delta_{c_2}}{2})\overline{u}^2
+\frac{\delta_{c_1}^2}{3}(\overline{u}^p)^2+\frac{\delta_{c_2}^2}{3} (\overline{u}^g)^2+\alpha^2)$.
\end{theorem}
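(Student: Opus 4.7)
The plan is to mimic the standard non-convex SGD descent argument (Bottou--Curtis--Nocedal style) but to handle the augmented PSO-plus-SGD velocity in \eqref{eq:DSL_velocity}. The starting point is Assumption~\ref{ass1}: since $F_i$ is $L$-smooth,
\begin{equation*}
F_i(\mathbf{w}_{i,t+1}) - F_i(\mathbf{w}_{i,t}) \;\leq\; \langle \nabla F_i(\mathbf{w}_{i,t}),\, \mathbf{v}_{i,t+1}\rangle \;+\; \tfrac{L}{2}\,\|\mathbf{v}_{i,t+1}\|^2 .
\end{equation*}
Substituting the closed form $\mathbf{v}_{i,t+1}=(c_0-c_1-c_2)\mathbf{v}_{i,t}+c_1\mathbf{v}_{i,t}^p+c_2\mathbf{v}_t^g-\alpha\nabla F_i(\mathbf{w}_{i,t})$ splits the right-hand side into a first-order inner-product piece and a second-order squared-norm piece, which I will treat separately before taking expectation over the random coefficients $c_1\sim\mathcal{U}(0,\delta_{c_1})$ and $c_2\sim\mathcal{U}(0,\delta_{c_2})$.

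For the inner-product term, distribute the product to get four scalar inner products. The term involving $\nabla F_i$ itself contributes $-\alpha\|\nabla F_i\|^2$. Each of the remaining three inner products is rewritten through the cosine identities, e.g.\ $\langle\nabla F_i,\mathbf{v}_{i,t}\rangle=-\|\mathbf{v}_{i,t}\|\|\nabla F_i\|\cos\theta_{i,t}$, and then converted into a multiple of $\|\nabla F_i(\mathbf{w}_{i,t})\|^2$ by applying the cosine bounds in \eqref{eq:desitascope} together with the norm-ratio bounds in \eqref{eq:denormscope1}--\eqref{eq:denormscope}. Taking expectation introduces $\mathbb{E}[c_1]=\delta_{c_1}/2$ and $\mathbb{E}[c_2]=\delta_{c_2}/2$, producing the $\underline{q}\underline{u}$, $\overline{u}^p\overline{q}^p$, $\overline{u}^g\overline{q}^g$ contributions visible in $\Phi_E$.

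For the squared-norm term, I expand $\|\mathbf{v}_{i,t+1}\|^2$ into the four diagonal pieces and the six cross pieces. The cross pieces are tamed by the elementary inequality $2\langle a,b\rangle\le\|a\|^2+\|b\|^2$, while each diagonal $\|\cdot\|^2$ is upper-bounded through \eqref{eq:denormscope1}--\eqref{eq:denormscope} as a multiple of $\|\nabla F_i(\mathbf{w}_{i,t})\|^2$. Taking expectation now uses the second moments $\mathbb{E}[c_1^2]=\delta_{c_1}^2/3$, $\mathbb{E}[c_2^2]=\delta_{c_2}^2/3$, together with $\mathbb{E}[c_1 c_2]$ under the independence of $c_1$ and $c_2$, and the moments of $(c_0-c_1-c_2)^2$, which is what yields the messy coefficient $c_0^2-\delta_{c_1}c_0-\delta_{c_2}c_0+\delta_{c_1}^2/3+\delta_{c_2}^2/3+\delta_{c_1}\delta_{c_2}/2$ multiplying $\overline{u}^2$, plus the $(\overline{u}^p)^2$, $(\overline{u}^g)^2$ and $\alpha^2$ pieces. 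Collecting everything gives, in expectation, a one-step descent inequality of the form
\begin{equation*}
\mathbb{E}[F_i(\mathbf{w}_{i,t+1})] - F_i(\mathbf{w}_{i,t}) \;\leq\; -\,\Phi_{E}\,\|\nabla F_i(\mathbf{w}_{i,t})\|^2 .
\end{equation*}

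From here the finish is routine: telescope for $t=1,\dots,T$, use the global minimizer lower bound $F(\mathbf{w}^*)\le F_i(\mathbf{w}_{i,T+1})$, divide by $T\Phi_E$, and combine with \eqref{eq:convergenceshow} to obtain the stated bound. The main obstacle I anticipate is bookkeeping rather than ideas: making the cosine/norm bounds align with the right direction so that the inner-product piece is upper-bounded correctly (the first-order term enters with an unfavorable sign after the smoothness inequality, so one has to use the right end of each interval in \eqref{eq:desitascope}--\eqref{eq:denormscope}), and then ensuring that the second-moment expansion of $(c_0-c_1-c_2)^2$ and the cross products $c_1(c_0-c_1-c_2)$, $c_2(c_0-c_1-c_2)$, $c_1c_2$ reproduce exactly the combination appearing inside $\Phi_E$. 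A subtle point is that $\mathbf{v}_{i,t}$ depends on earlier random draws of $c_1,c_2$; this is absorbed cleanly because the bounds in \eqref{eq:denormscope1}--\eqref{eq:denormscope} are assumed to hold uniformly in $i,t$, so conditioning on the past history leaves the per-step inequality unchanged before taking outer expectation.
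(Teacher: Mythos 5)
Your proposal follows essentially the same route as the paper's own proof in Appendix A: the $L$-smoothness descent lemma applied to $\mathbf{v}_{i,t+1}$, the cosine/norm-ratio bounds \eqref{eq:desitascope}--\eqref{eq:denormscope} to convert the first-order inner products into multiples of $\|\nabla F_i(\mathbf{w}_{i,t})\|^2$, a factor-of-four bound on $\|\mathbf{v}_{i,t+1}\|^2$ (your cross-term inequality $2\langle a,b\rangle\le\|a\|^2+\|b\|^2$ over six pairs is identical to the paper's triangle-plus-Jensen step), expectation over $c_1,c_2$ using the first and second uniform moments, and a telescoping sum. The sign bookkeeping you flag for the inner-product bounds is indeed the one delicate spot, and your explicit conditioning remark about $\mathbf{v}_{i,t}$ depending on past draws is, if anything, more careful than the paper's treatment.
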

\begin{proof}
Please refer to Appendix \ref{Appendix_A}.
\end{proof}

The result of \textbf{Theorem~\ref{theorem1}} implies the following 
convergence rate
\begin{align}\label{eq:conver1}
\mathbb{E}\left[\sum_{t=1}^{T}\frac{\|\nabla F_i(\mathbf{w}_{i,t})\|^2}{T}\right]\leq \mathcal{O}(\frac{1}{T\Phi_E}).
\end{align}
The inequality of \eqref{eq:conver1} indicates that the convergence of the CB-DSL is guaranteed as the number of communication rounds $T$ goes large. \textcolor[rgb]{0.00,0.00,0.00}{That is, as $T\rightarrow \infty$, we have $\mathbb{E}\left[\sum_{t=1}^{T}\frac{\|\nabla F_i(\mathbf{w}_{i,t})\|^2}{T}\right] \rightarrow 0$.}

\begin{remark}
When $c_0$, $\delta_{c_1}$, and $\delta_{c_2}$ are all set to be $0$, we have $\Phi_E=\alpha-2L\alpha^2$ in \eqref{eq:theorem1} and \eqref{eq:conver1},
and CB-DSL degenerates into FedAvg.
As {\color{black}$\Phi_E-(\alpha-2L\alpha^2)=\frac{\delta_{c_1}+\delta_{c_2}-2c_0}{2}\underline{q}\underline{u} +2L((\delta_{c_1}c_0+\delta_{c_2}c_0 - c_0^2 - \frac{\delta_{c_1}^2}{3}
-\frac{\delta_{c_2}^2}{3} -\frac{\delta_{c_1}\delta_{c_2}}{2})\overline{u}^2
-\frac{\delta_{c_1}^2}{3}(\overline{u}^p)^2-\frac{\delta_{c_2}^2}{3} (\overline{u}^g)^2) -\frac{\delta_{c_1}}{2} \overline{u}^p\overline{q}^p -\frac{\delta_{c_2}}{2} \overline{u}^g\overline{q}^g > 0$}, CB-DSL converges faster than FedAvg.
\end{remark}


\section{Model Divergence Analysis for the Case of Non-i.i.d. Data}\label{sec:weightdiver}
Intuitively, when the local datasets $\mathfrak{D}_i$ over different local workers are non-i.i.d., the learning performance varies with the degree of the local dataset heterogeneity.
Specifically, the greater the heterogeneity of local datasets, the model parameters updated at different local workers will become more diverse, e.g., with a larger range of the values of $\cos \theta_{i,t}$  in \eqref{eq:desitascope} among workers. That is, {\color{black}$\underline{q}$ and $\overline{q}$ in \eqref{eq:desitascope} go smaller and bigger, respectively. As a result, $\Phi_E$ defined in \eqref{eq:theorem1} decreases} 
as the heterogeneity of non-i.i.d. datasets increases, which {\color{black}leads to a loose upper bound on the convergence guarantee} 
in \eqref{eq:theorem1} and \eqref{eq:conver1} {\color{black}and thus degrades the learning performance with distributed non-i.i.d. datasets}. 
In this section, we provide {\color{black}a} statistical analysis to evaluate the impact of the local data heterogeneity on the learning performance of the CB-DSL.
We study the model parameter divergence resulted from the distance enlargement between the non-i.i.d. data distributions on local workers and the overall population distribution. 
We {\color{black}evaluate} such a distance by measuring the earth mover's distance (EMD) between these distributions \cite{rubner2000earth,zhao2018federated}. 

Consider a $C$-class classification problem defined over a compact space $\mathcal{X}$ and a label space $\mathcal{Y}$. The $k$-th data point $(\mathbf{x}_{i,k},y_{i,k})$ on the $i$-th local worker distributes over $\mathcal{X}\times \mathcal{Y}$ following the distribution $p_i$.
For the purpose of model divergence analysis, 
{\color{black}suppose} a genie worker who has the population data that reflect the population distribution $p$ of all local workers that may differ from $p_i$. The genie worker uses such knowledge of $p$ to 
{\color{black}search} for the globally optimal solution to the learning model, which serves as the {\color{black}reference} 
to calibrate the model divergence {\color{black}due to} 
the distributed non-i.i.d. data.
Then the original population loss function $F(\mathbf{w}):=\mathbb{E}_{\mathfrak{D}}[f(\mathbf{w};\mathbf{x}_{i,k},y_{i,k})]$ can be rewritten as 
\begin{align}
F(\mathbf{w})= \mathbb{E}_{\mathbf{x},y\sim p}\left[ \sum_{c=1}^C \mathds{1}_{y=c} f_c(\mathbf{x},\mathbf{w}) \right] = \sum_{c=1}^Cp(y=c)\mathbb{E}_{\mathbf{x}|y=c}[f_c(\mathbf{x},\mathbf{w})],
\end{align}
where $f_c$ denotes the probability for the $c$-th class, $c \in \{1,C\}$.

Then, the learning problem at the genie worker 
can be formulated as
\begin{align}
 \textbf{P2:} \quad \mathbf{w}^*= \arg \min_{\mathbf{w}}& \quad \sum_{c=1}^Cp(y=c)\mathbb{E}_{\mathbf{x}|y=c}[f_c(\mathbf{x},\mathbf{w})].
\end{align}
By solving \textbf{P2}, the model obtained at the genie worker {\color{black}plays as} 
the globally optimal position in  each communication round of CB-DSL. Then according to \eqref{eq:DSL_velocity}, the velocity at the genie worker in the $(t+1)$-th communication round is updated via
{\color{black}\begin{align}\label{eq:spgen}
\mathbf{v}^g_{t+1}&=c_0 \mathbf{v}^g_{t}-\alpha\nabla F(\mathbf{w}^g_{t}).
\end{align}}
The model parameter at the genie worker in the $(t+1)$-th communication round is updated as
{\color{black}\begin{align}\label{eq:weigh}
\mathbf{w}^g_{t+1}&=\mathbf{w}^g_{t}+\mathbf{v}^g_{t+1}.
\end{align}}
Given \eqref{eq:DSL_w} and \eqref{eq:weigh}, the model divergence between the $i$-th local worker and the genie worker is defined as
\begin{align}
model\,\,\, divergence=\frac{\|\mathbf{w}_{i,t+1}-\mathbf{w}^g_{t+1}\|}{\|\mathbf{w}^g_{t+1}\|}.
\end{align}

Next, we provide \textbf{Theorem~\ref{theo:theorem2}} to evaluate the model divergence by deriving its upper bound theoretically.
\begin{theorem}\label{theo:theorem2}
Under the assumption that $\nabla\mathbb{E}_{\mathbf{x}|y=c}[f_c(\mathbf{x},\mathbf{w})]$ is $L_c$-Lipschitz for each class $c {\in} \{1,C\}$, we have the following inequality for the model divergence after $(t+1)$ communication rounds
\begin{align}
\|\mathbf{w}_{i,t+1}-\mathbf{w}^g_{t+1}\|
&\leq \beta^{t+1}\|\mathbf{w}_{i,0}
-\mathbf{w}^g_{0}\|+{\color{black}|c_0-c_1-c_2|}\sum_{j=0}^t\beta^{t-j}\| \mathbf{v}_{i,j}-\mathbf{v}_{j}^g\|
\nonumber\\
&
+\alpha \sum_{c=1}^C\|p_i(y=c)-p(y=c)\|\sum_{j=0}^tf_{max}(\mathbf{w}^g_{j}),\label{eq:theo2}
\end{align}
where $\beta=1+\alpha \sum_{c=1}^Cp_i(y=c)L_c$ and $f_{max}{\color{black}(\mathbf{w}^{g}_{j})}= \max\{\nabla\mathbb{E}_{\mathbf{x}|y=c}[f_c(\mathbf{x},\mathbf{w}^g_{j})]\}_{c=1}^C$.
\end{theorem}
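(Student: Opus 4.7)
The plan is to start from the one-step recursion for the divergence and unroll it after bounding the per-step error. Subtracting the genie update \eqref{eq:weigh} from the local update \eqref{eq:DSL_w}, and using \eqref{eq:DSL_velocity} for $\mathbf{v}_{i,t+1}$ together with \eqref{eq:spgen} for $\mathbf{v}^g_{t+1}$, I would write
\begin{equation*}
\mathbf{w}_{i,t+1}-\mathbf{w}^g_{t+1}
= (\mathbf{w}_{i,t}-\mathbf{w}^g_{t})
+ (c_0-c_1-c_2)(\mathbf{v}_{i,t}-\mathbf{v}^g_{t})
+ c_1(\mathbf{v}^p_{i,t}-\mathbf{v}^g_{t})
- \alpha\bigl(\nabla F_i(\mathbf{w}_{i,t})-\nabla F(\mathbf{w}^g_{t})\bigr),
\end{equation*}
which isolates the velocity-difference term with the coefficient $(c_0-c_1-c_2)$ that appears in the final bound, while the gradient-mismatch term carries the dependence on the data distributions.

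The central step is the decomposition of the gradient mismatch in the spirit of Zhao \emph{et al.}~\cite{zhao2018federated}. Using $\nabla F_i(\mathbf{w})=\sum_{c=1}^C p_i(y=c)\nabla\mathbb{E}_{\mathbf{x}|y=c}[f_c(\mathbf{x},\mathbf{w})]$ and the analogous expression for $\nabla F(\mathbf{w})$, I would split
\begin{equation*}
\nabla F_i(\mathbf{w}_{i,t})-\nabla F(\mathbf{w}^g_{t})
= \bigl[\nabla F_i(\mathbf{w}_{i,t})-\nabla F_i(\mathbf{w}^g_{t})\bigr]
+ \bigl[\nabla F_i(\mathbf{w}^g_{t})-\nabla F(\mathbf{w}^g_{t})\bigr].
\end{equation*}
The first bracket is bounded via the per-class $L_c$-Lipschitz assumption, yielding $\sum_c p_i(y=c)L_c\,\|\mathbf{w}_{i,t}-\mathbf{w}^g_{t}\|$, which combined with the identity coefficient on $(\mathbf{w}_{i,t}-\mathbf{w}^g_{t})$ produces the contraction constant $\beta=1+\alpha\sum_c p_i(y=c)L_c$. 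The second bracket is bounded by $\sum_c|p_i(y=c)-p(y=c)|\,f_{\max}(\mathbf{w}^g_{t})$, giving exactly the EMD-type term multiplied by $\alpha$.

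Applying the triangle inequality to the recursion and absorbing the Lipschitz contribution into $\beta$ yields
\begin{equation*}
\|\mathbf{w}_{i,t+1}-\mathbf{w}^g_{t+1}\|
\leq \beta\,\|\mathbf{w}_{i,t}-\mathbf{w}^g_{t}\|
+ |c_0-c_1-c_2|\,\|\mathbf{v}_{i,t}-\mathbf{v}^g_{t}\|
+ \alpha\sum_{c=1}^C|p_i(y=c)-p(y=c)|\,f_{\max}(\mathbf{w}^g_{t}),
\end{equation*}
after which I would iterate from $t$ down to $0$. Unrolling produces the initial-divergence term $\beta^{t+1}\|\mathbf{w}_{i,0}-\mathbf{w}^g_{0}\|$ and two geometric sums weighted by $\beta^{t-j}$, matching \eqref{eq:theo2}.

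The main obstacle is the bookkeeping around the residual term $c_1(\mathbf{v}^p_{i,t}-\mathbf{v}^g_{t})$ that emerges after the subtraction, since the stated bound only shows the coefficient $|c_0-c_1-c_2|$ on the velocity gap. I expect to handle this either by noting that along the trajectory of interest the personal-best direction coincides with the globally-best direction (so $\mathbf{v}^p_{i,t}=\mathbf{v}^g_{t}$ whenever the local best has been broadcast), or by folding that term into the velocity-difference slot via triangle inequality; choosing the cleanest of these is the delicate point. The distribution-mismatch and Lipschitz steps, by contrast, are routine once the decomposition is set up.
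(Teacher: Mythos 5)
Your proposal follows essentially the same route as the paper's proof in Appendix B: a triangle inequality on the one-step recursion $\mathbf{w}_{i,t+1}-\mathbf{w}^g_{t+1}=(\mathbf{w}_{i,t}-\mathbf{w}^g_{t})+(\mathbf{v}_{i,t+1}-\mathbf{v}^g_{t+1})$, the Zhao-et-al.-style add-and-subtract decomposition of the gradient mismatch into a per-class Lipschitz part (absorbed into $\beta$) and an EMD part, and then unrolling the resulting contraction. The ``delicate point'' you flag, the residual $c_1(\mathbf{v}^p_{i,t}-\mathbf{v}^g_{t})$ term, is handled in the paper exactly by your first option --- replacing $\mathbf{v}^p_{i,t}$ by $\mathbf{v}^g_{t}$ inside the norm (stated as an inequality without further justification) --- so your concern identifies a real weak spot shared by the paper's own argument rather than a departure from it.
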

\begin{proof}
Please refer to Appendix \ref{Appendix_B}.
\end{proof}

\begin{remark}
Our theoretical result of \textbf{Theorem~\ref{theo:theorem2}} indicates that the model divergence can be upper bounded in \eqref{eq:theo2} after $(t+1)$ communication rounds, which mainly comes from three parts, including the initial model divergence, i.e., $\|\mathbf{w}_{i,0}-\mathbf{w}^g_{0}\|$, the velocity divergence after $t$ communication rounds, i.e., $\| \mathbf{v}_{i,j}-\mathbf{v}_{j}^g\|$, and the model divergence induced by the probability distance between the data distribution on the $i$-th local worker and the ground truth distribution for the whole population as on the genie worker, i.e., $\sum_{c=1}^C\|p_i(y=c)-p(y=c)\|$.
\end{remark}
\begin{remark}
In \eqref{eq:theo2}, the initial model divergence (first term) and the velocity divergence (second term) after $(t+1)$ communication rounds are iteratively amplified by $\beta$. Since $\beta>1$, if different local workers start from different initial model parameters in the standard FL, then the model divergence will still be enlarged, even though the local workers have  i.i.d. data.
\end{remark}
\begin{remark}
In \eqref{eq:theo2}, the third term $\sum_{c=1}^C\|p_i(y=c)-p(y=c)\|$ is the EMD between the data distribution on the $i$-th local worker and the population distribution \cite{rubner2000earth}, when the distance metric is defined as $\|p_i(y=c)-p(y=c)\|$. The impact of EMD is affected by the learning rate $\alpha$, the number of communication rounds $t$, and the class-wise maximum gradient $f_{max}(\mathbf{w}_{j})$.
\end{remark}

\section{Experimental Results}
This section demonstrates that our CB-DSL with a small amount of globally shared dataset outperforms the benchmark methods, with  better learning performance and  faster convergence speed, on both the i.i.d. and non-i.i.d. settings, even in the presence of Byzantine attacks. 
\subsection{System and Dataset Setting}
To evaluate the learning performance of our CB-DSL, 
we perform empirical simulations by conducting a handwritten-digit classification task based on the widely-used MNIST dataset\footnote{http://yann.lecun.com/exdb/mnist/} that consists of 10 classes ranging from digit ``0" to ``9". In the MNIST dataset, there are 60000 training samples and 10000 testing samples.
In the training procedure, we set the total number of local workers to be $U=50$, as the IoT devices in an edge network.
For each local worker in the i.i.d. setting, 300 distinct training samples are randomly selected as its local datasets, i.e., $K_i =300, \forall i$.
To build the non-i.i.d. data setting upon the MNIST dataset,
we first sort all the 60000 training samples based on the classification labels. Then we divide the 60000 training samples into 200 shards, each of which consists 300 samples, that are highly non-i.i.d. shard by shard \cite{mcmahan2017communication}. We randomly allocate two shards to each local worker for the edge learning problem.
The globally shared scoring dataset $\mathfrak{D}^G_{sc}$ consists of 2000 data samples, 
and the globally shared training dataset $\mathfrak{D}^G_{tr}$ consists of 150 data samples for the i.i.d. setting and 600 data samples for the non-i.i.d. setting.
In addition, we set 
$c_0=1$, $\delta_{c_1}=1$, and $\delta_{c_2}=1$.
\subsection{Neural Network Setting}
For the learning model architecture, we use a five-layer Convolutional Neural
Network (CNN) whose detailed hyperparameter settings are listed in {Table \ref{table:1}}.
\begin{table*}[!tb]
 \caption{Model architecture of the experiment.}\label{table:1}
   \centering
 \begin{tabular}{|c|c|}
  \hline
  Layer & Details \\
  \hline
  1 & \makecell*[c]{Conv2D(1, 6, 5)\\
                     ReLU, MaxPool2D(2, 2)}  \\
  \hline
  2 & \makecell*[c]{Conv2D(6, 16, 5)\\
                     ReLU, MaxPool2D(2, 2)} \\
  \hline
  3 & \makecell*[c]{FC(16 * 4 * 4, 120)\\
                     ReLU} \\
  \hline
  4 & \makecell*[c]{FC(120, 84)\\
                     ReLU} \\
  \hline
  5 & \makecell*[c]{FC(84,10)} \\
  \hline
\end{tabular}
\end{table*}
%
For the convolutional layers (Conv2D), we list the sizes of the parameters with sequence of input and output dimensions, and kernel size. For the max pooling layers (MaxPool2D), we list kernel and stride sizes. For the fully-connected layers (FC), we list input and output dimensions.
During the training process, we use the SGD optimizer with learning rate $\alpha=0.005$ and the cross-entropy loss. The batch size is set as $\|\mathfrak{B}_i\|=10, \forall i$ for the mini-batch SGD\cite{mcmahan2017communication,qian2015efficient}.
\subsection{Different Approaches}
{\color{black}We compare the proposed CB-DSL with FedAvg~\cite{konevcny2016federated}, given either i.i.d. or non-i.i.d. data, for different cases of globally shared dataset}
(without any shared dataset, with shared dataset for scoring, with shared dataset for training, with shared dataset for both scoring and training), including:
\begin{enumerate}
  \item \emph{FedAvg without any globally shared dataset $\mathfrak{D}^G$}: it is the standard FedAvg \cite{konevcny2016federated}.
  \item \emph{\textcolor[rgb]{0.00,0.00,0.00}{CB-DSL without any globally shared dataset $\mathfrak{D}^G$}}: the local workers use their own local dataset to calculate $F^p_{i,t}$.
  \item \emph{CB-DSL with a globally shared dataset for scoring $\mathfrak{D}^G_{sc}$}: the local workers use the globally shared scoring dataset to calculate $F^p_{i,t}$ in CB-DSL.
  \item \emph{FedAvg with a globally shared dataset for training $\mathfrak{D}^G_{tr}$}: the local workers use both their own local dataset and the globally shared training dataset to train their local models in standard FedAvg\cite{konevcny2016federated}.
  \item \emph{CB-DSL with a globally shared dataset for both training $\mathfrak{D}^G_{tr}$ and scoring $\mathfrak{D}^G_{sc}$}: the local workers use both their own local dataset and the globally shared training dataset to train their local models and then use the globally shared scoring dataset to calculate $F^p_{i,t}$ in CB-DSL.
\end{enumerate}
\subsection{Evaluation and Comparison}
\begin{figure}[tb]
  \centering
  \includegraphics[scale=0.55]{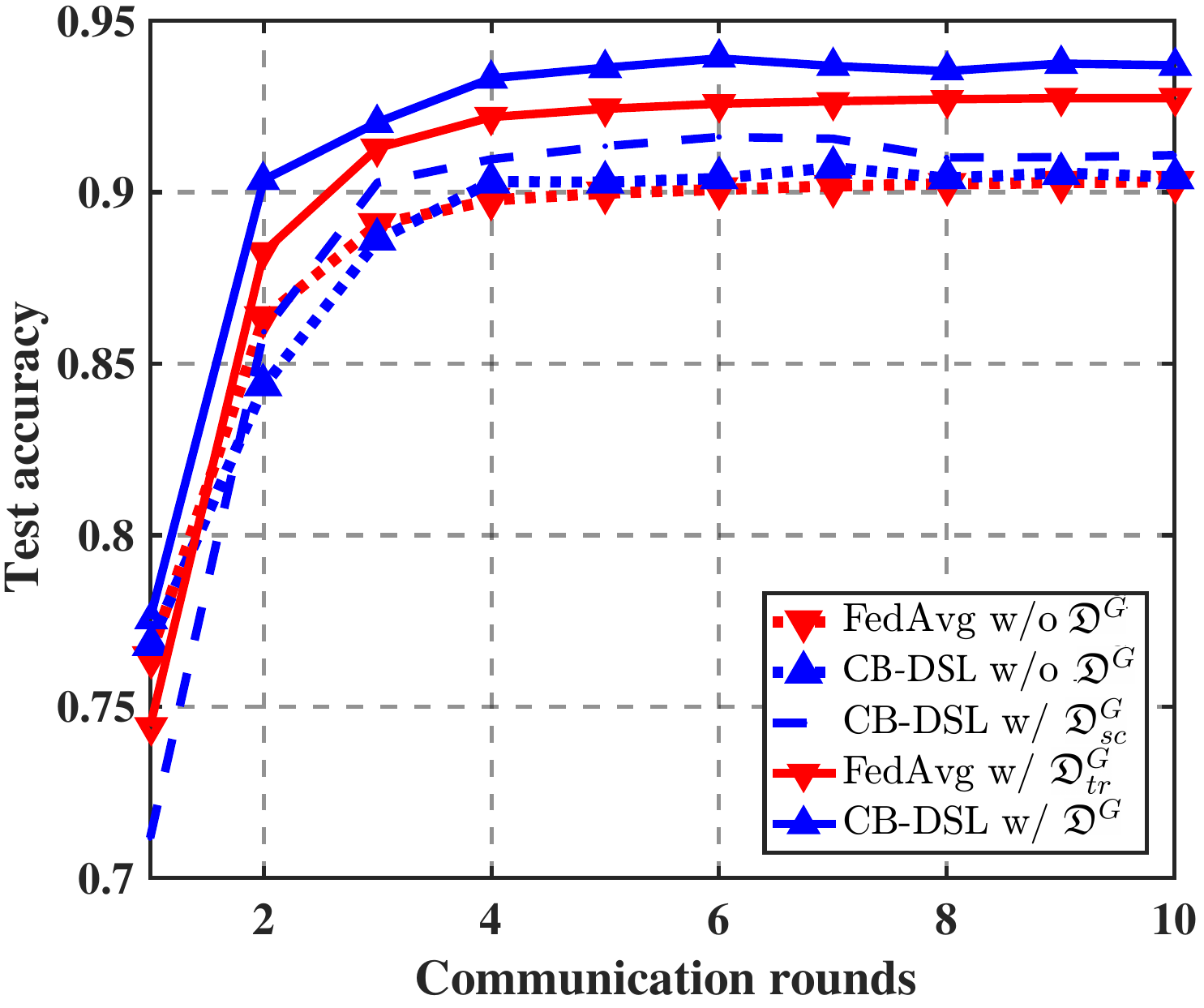}
    \caption{The performance comparison under the i.i.d. setting.}\label{fig:iid}
\end{figure}

\begin{figure}[tb]
  \centering
  \includegraphics[scale=0.55]{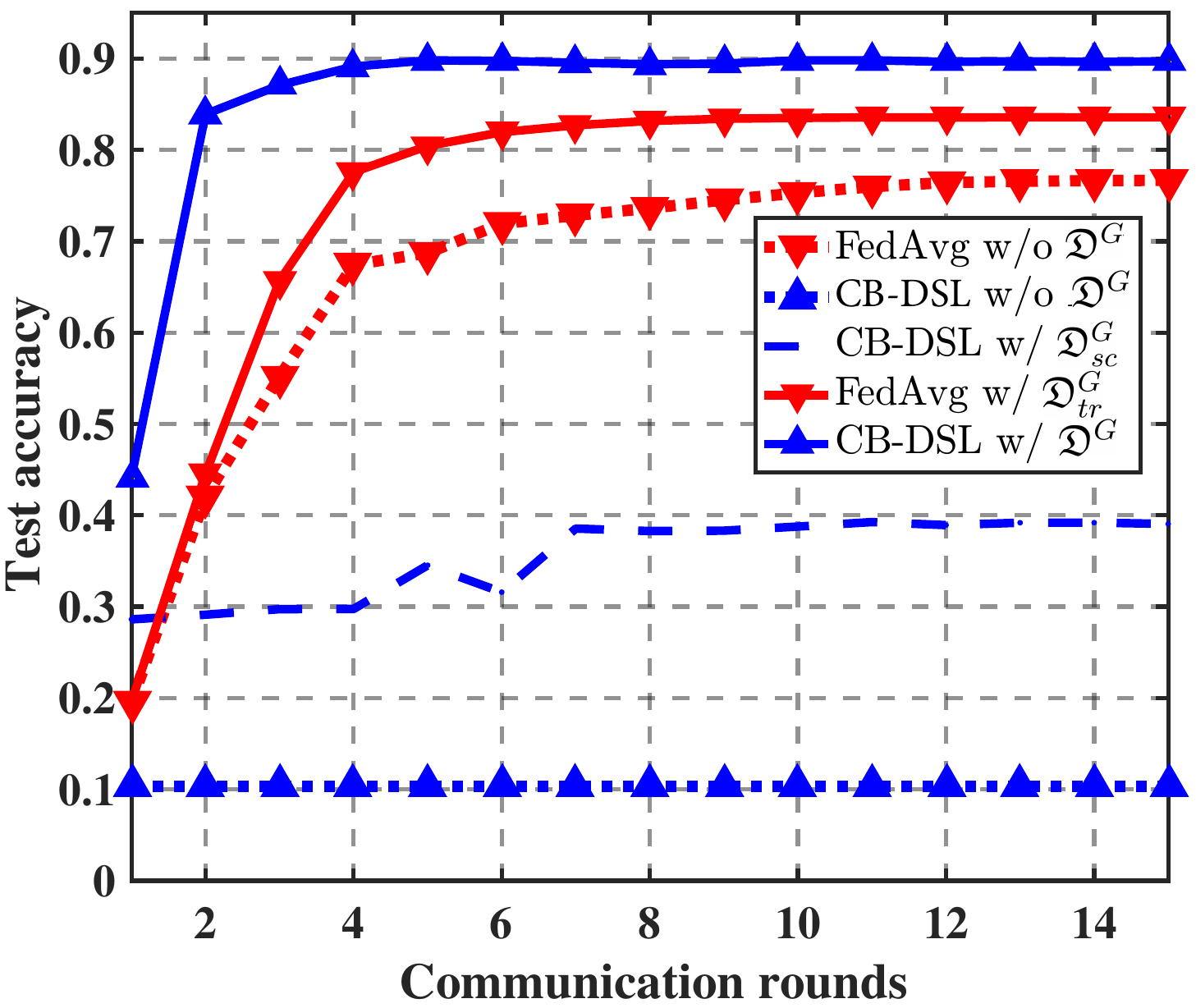}
    \caption{The performance comparison under the non-i.i.d. setting.}\label{fig:noniid}
\end{figure}

Fig. \ref{fig:iid} and Fig. \ref{fig:noniid} show the simulation results for the five cases under the i.i.d. and the non-i.i.d. settings, respectively. As shown in Fig. \ref{fig:iid}, \textcolor[rgb]{0.00,0.00,0.00}{CB-DSL without $\mathfrak{D}^G$} is slightly better than FedAvg under the same learning settings for the i.i.d. case.
A globally shared scoring dataset $\mathfrak{D}^G_{sc}$ introduced in CB-DSL can improve the learning performance of {\color{black}CB-DSL without any globally shared dataset.} 
This is because $\mathfrak{D}^G_{sc}$ can help to select the global optimum more accurately than that based on local workers simply using their own dataset which however makes the loss function $F(\cdot)$ only partially observable at local workers.
In addition, a globally shared training dataset $\mathfrak{D}^G_{tr}$ can further improve the learning performance of FedAvg and CB-DSL, since the data samples are increased for training. Meanwhile, CB-DSL is superior thanks to its benefits by leveraging the exploration-exploitation gains from the BI component and the fast convergence characteristics from the AI component.

\begin{figure}[tb]
  \centering
  \includegraphics[scale=0.55]{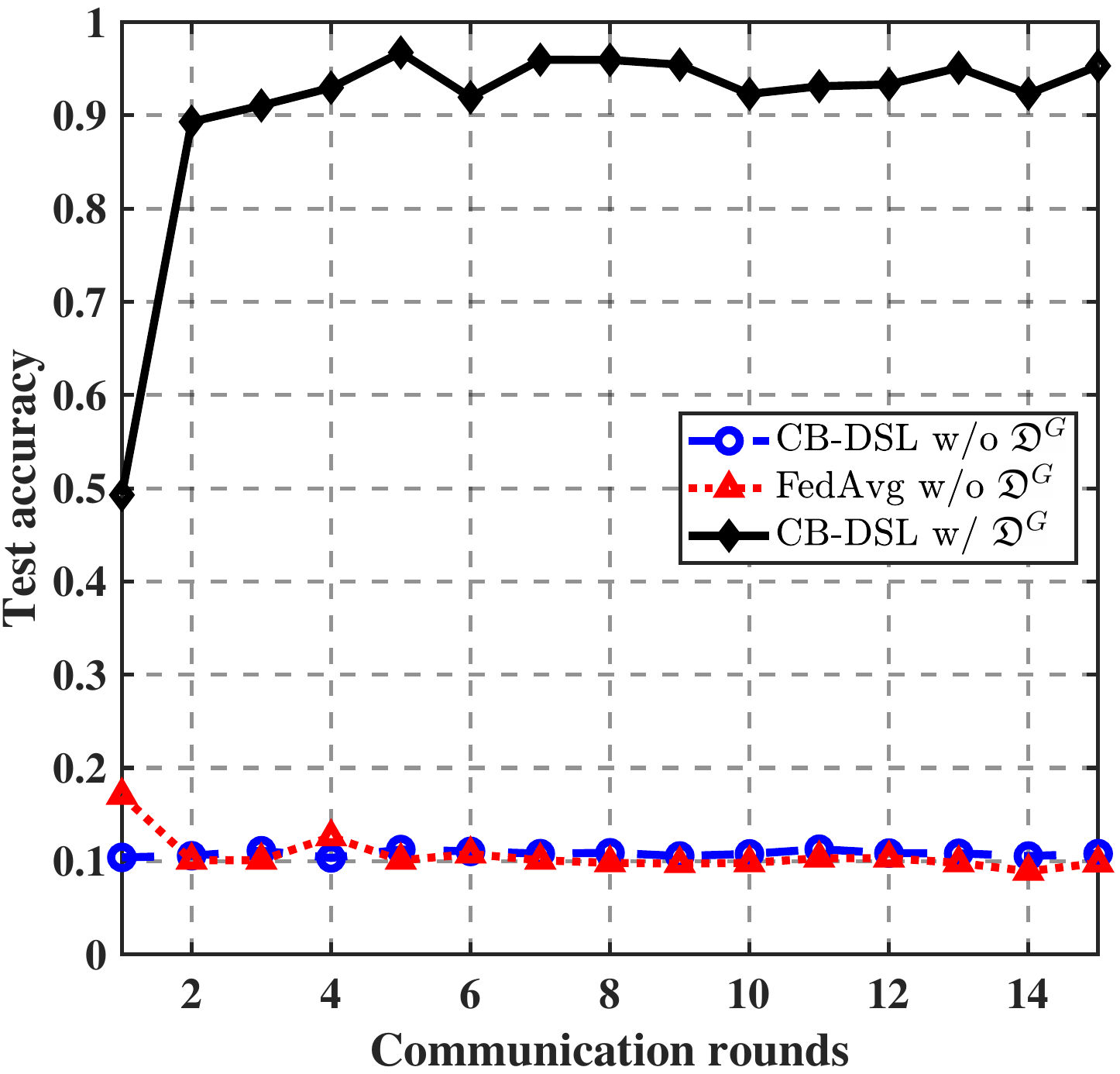}
    \caption{The performance comparison with a Byzantine attacker under the i.i.d. setting.}\label{fig:iidatt}
\end{figure}

\begin{figure}[tb]
  \centering
  \includegraphics[scale=0.55]{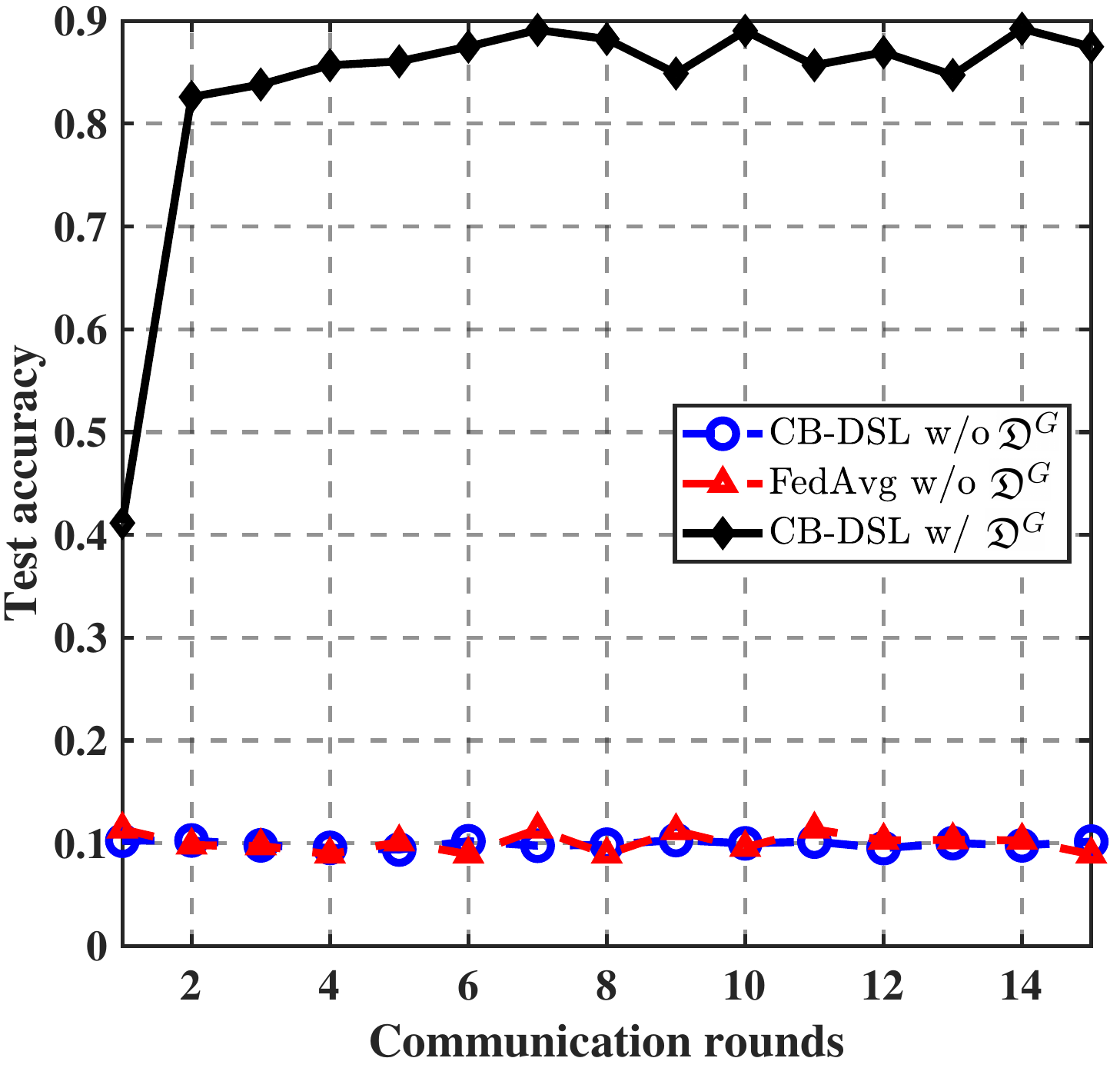}
    \caption{The performance comparison with a Byzantine attacker under the non-i.i.d. setting.}\label{fig:noniidatt}
\end{figure}
In Fig. \ref{fig:noniid}, \textcolor[rgb]{0.00,0.00,0.00}{when CB-DSL {\color{black}runs} without globally shared dataset for training $\mathfrak{D}^G_{tr}$, it cannot work properly in the non-i.i.d. setting.
This is because CB-DSL hinges} on single best worker selection which however may not hold the optimum model at all due to the model divergence from the ground truth population distribution point of view in the non-i.i.d. setting.
Although using a globally shared scoring dataset $\mathfrak{D}^G_{sc}$ can slightly improve the learning performance of CB-DSL, it is still worse than FedAvg {\color{black}where all workers with non-i.i.d. data contribute to model average at the cost of high communication cost}. 
When both a globally shared training dataset and scoring dataset are used as $\mathfrak{D}^G=\mathfrak{D}^G_{tr}\cup\mathfrak{D}^G_{sc}$, CB-DSL {\color{black}turns to outperform} 
FedAvg.
This is because $\mathfrak{D}^G_{tr}$ helps to relieve the local data heterogeneity issue by making the local datasets to become more i.i.d., which decreases the EMD between the data distributions on local workers and the population distribution as {\color{black}revealed by our model divergence analysis} in Section V. \textcolor[rgb]{0.00,0.00,0.00}{Besides, {\color{black}the improvement on learning accuracy also indicates that by} using the exploration-exploitation mechanism of PSO, our CB-DSL solutions have an increased chance to
jump out of local optimum traps via the swarm intelligence.}

In Fig. \ref{fig:iidatt} and Fig. \ref{fig:noniidatt}, we provide the performance comparison in the presence of the Byzantine attack for both the i.i.d. and the non-i.i.d. settings, respectively. It is obvious that even only one Byzantine attacker can {\color{black}fail FedAvg and} 
CB-DSL without $\mathfrak{D}^G$. On the other hand, the CB-DSL with $\mathfrak{D}^G$ can effectively defend the Byzantine attack, 
because the globally shared dataset for scoring $\mathfrak{D}^G_{sc}$ can {\color{black}help identify and} screen out the Byzantine attacker {\color{black}as explained in \textbf{Algorithm~\ref{alg:policyforCBFedPSO}}}.

\begin{figure}[tb]
  \centering
  \includegraphics[scale=0.55]{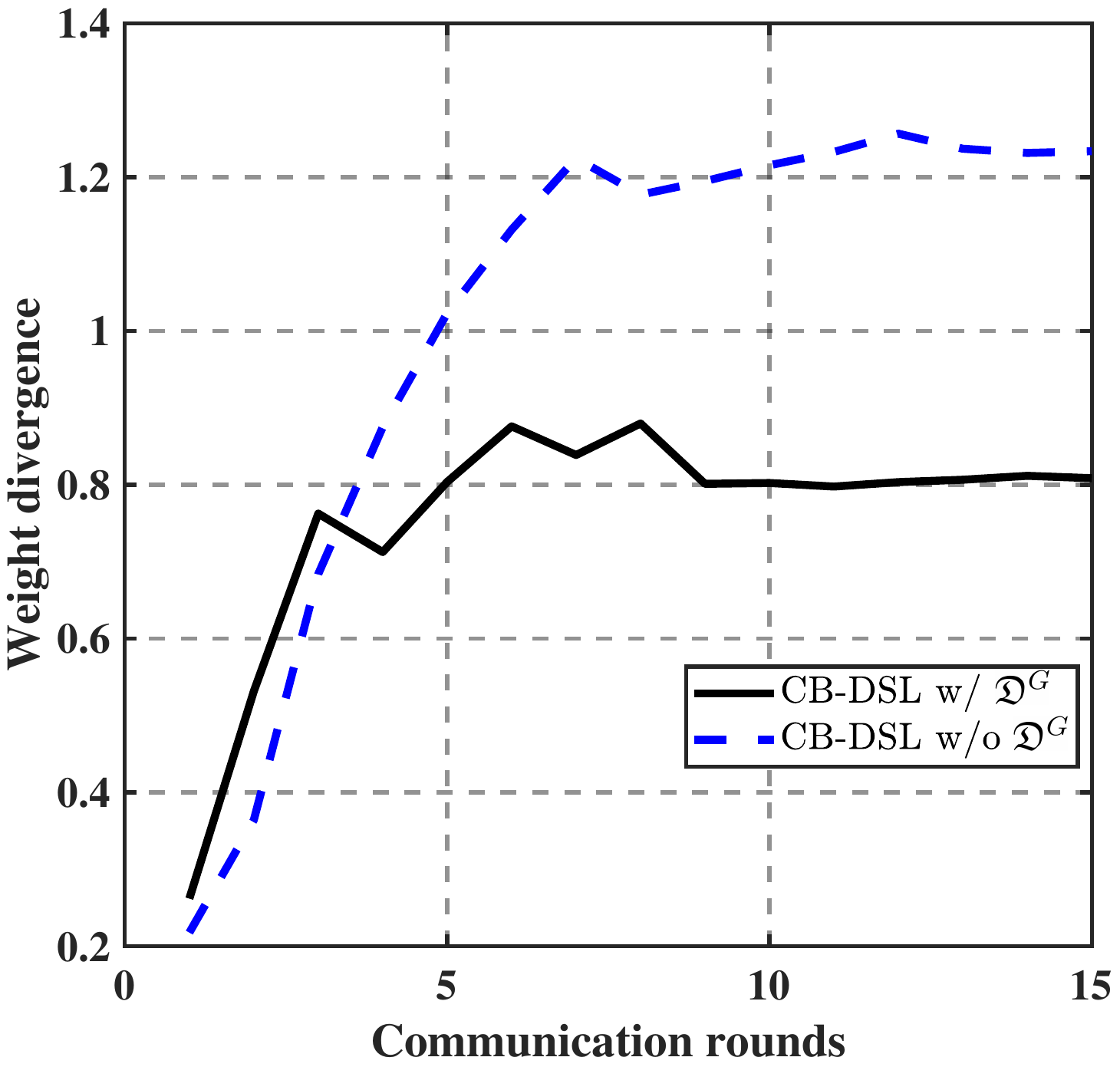}
    \caption{The comparison of the weight divergences under the non-i.i.d. setting.}\label{fig:weight_diver}
\end{figure}
\textcolor[rgb]{0.00,0.00,0.00}{In Fig.~\ref{fig:weight_diver}, we {\color{black}further evaluate 
the weight divergences effects} under the non-i.i.d. setting. As the communication rounds increase, the weight divergences of CB-DSL with or without $\mathfrak{D}^G$ {\color{black}first increase and then flatten out after several communication rounds.} 
The final steady-state weight divergence of the CB-DSL with $\mathfrak{D}^G$ is {\color{black}much less} than that of the CB-DSL without $\mathfrak{D}^G$, {\color{black}as depicted by the gap between the two curves in Fig.~\ref{fig:weight_diver}. Such a nontrivial gap confirms the theoretical results} 
of \textbf{Theorem~\ref{theo:theorem2}}: (1) the model divergence will be enlarged as the communication rounds increase (this is because that the initial model divergence is iteratively amplified by $\beta$, as explained in \emph{Remark 3}); (2) the use of global data $\mathfrak{D}^G$ can reduce the weight divergence (this is because that the use of $\mathfrak{D}^G$ decreases the EMD between the data distributions on local workers and the population distribution, as explained in \emph{Remark 4}).}

{\color{black}Note that} only one local worker is selected and {\color{black}invited} 
to send its model parameter to the PS in CB-DSL, while all workers need to send their model parameters to the PS in FedAvg.
Therefore, \textcolor[rgb]{0.00,0.00,0.00}{the communication cost consumed in CB-DSL is only $\frac{1}{U}$ of that in FedAvg,} given the fact that the communication cost for the transmission of loss function values as a scalar is relatively trivial to the transmission of the model parameter vector and thus can be ignored. \textcolor[rgb]{0.00,0.00,0.00}{In addition, we can see from Fig. \ref{fig:iid} and Fig. \ref{fig:noniid} that our CB-DSL with $\mathfrak{D}^G$ uses fewer communication rounds than FedAvg to achieve the same learning accuracy. {\color{black}As a result}, our CB-DSL is communication-efficient {\color{black}with less communication rounds and less communication overhead per round} in practical applications.}

%

\section{Conclusion}
This work studies a novel communication-efficient and Byzantine-robust distributed swarm learning (CB-DSL) approach for edge IoT systems, as a holistic integration of the AI-enabled SGD and the BI-enabled PSO. 
We propose to introduce a globally shared dataset to overcome the major challenging issues in edge learning including: the partially observability of loss function in distributed learning problems, the non-i.i.d. local data issues, and the potential Byzantine attacks.
We provide theoretical analysis of the convergence behavior of the proposed CB-DSL, which indicates that our method can achieve better learning performance than  existing distributed learning methods.
Further, we provide the model divergence evaluation of 
the proposed CB-DSL in the non-i.i.d. settings, 
which quantifies how a globally shared dataset can improve the learning performance of the CB-DSL in the non-i.i.d. setting.
Simulation results verify that our proposed CB-DSL solution can improve learning performance in both the i.i.d. and non-i.i.d. settings, \textcolor[rgb]{0.00,0.00,0.00}{compared with the standard FedAvg.} Meanwhile, the communication saving by the CB-DSL inherits the advantage of the bio-inspired PSO techniques with much reduced communication cost than standard FedAvg.

\section*{Acknowledgments}
This work was partly supported by the National Natural Science Foundation of China (Grants \#61871023 and \#61931001), Beijing Natural Science Foundation (Grant \#4202054), the National Science Foundation of the US (Grants
{\color{black}\#1939553, \#2003211,\#2128596, \#2136202 and \#2231209)}, and the Virginia Research Investment Fund (Commonwealth Cyber Initiative Grant \#223996).

\begin{appendices}
\section{Proof of \textbf{Theorem \ref{theorem1}}}\label{Appendix_A}
\begin{proof}
Because $F_i(\cdot)$ is $L$-smooth from Assumption 1, according to \cite[Lemma 3.4]{bubeck2015convex} {\color{black}and velocity update in \eqref{eq:DSL_velocity}}, we have
\begin{align}
F_i(\mathbf{w}_{i,t+1})-F_i(\mathbf{w}_{i,t})&\leq{\color{black}(\mathbf{w}_{i,t+1}-\mathbf{w}_{i,t})^T} \nabla F_i(\mathbf{w}_{i,t})
 +\frac{L}{2}\|\mathbf{w}_{i,t+1}-\mathbf{w}_{i,t}\|^2\nonumber\\
 &{\color{black}=\mathbf{v}_{i,t+1}^T}\nabla F_i(\mathbf{w}_{i,t})+\frac{L}{2}\|\mathbf{v}_{i,t+1}\|^2\nonumber\\
 &= {\color{black}(c_0-c_1-c_2) \mathbf{v}_{i,t}^T} \nabla F_i(\mathbf{w}_{i,t})+c_1{\color{black}(\mathbf{v}_{i,t}^p)^T}\nabla F_i(\mathbf{w}_{i,t})\nonumber\\
 &+c_2 {\color{black}(\mathbf{v}_{t}^g)^T}\nabla F_i(\mathbf{w}_{i,t})-\alpha\|\nabla F_i(\mathbf{w}_{i,t})\|^2+\frac{L}{2}\|\mathbf{v}_{i,t+1}\|^2
 .\label{eq:smooth0}
\end{align}

According to the definitions and assumptions of $\overline{q}$, $\overline{q}^p$, $\overline{q}^g$, $\underline{q}$, $\underline{q}^p$, $\underline{q}^g$, $\overline{u}$, $\overline{u}^p$, $\overline{u}^g$, $\underline{u}$, $\underline{u}^p$, $\underline{u}^g$ in \eqref{eq:desitascope}-\eqref{eq:denormscope}, for any $i$ and $t$, we have
\begin{align}
&\underline{u}\underline{q}\|\nabla F_i(\mathbf{w}_{i,t})\|^2\leq {\color{black}\mathbf{v}_{i,t}^T\nabla F_i(\mathbf{w}_{i,t})}=\|\mathbf{v}_{i,t}\|\|\nabla F_i(\mathbf{w}_{i,t})\|\cos \theta_{i,t}\leq \overline{u}\ \overline{q}\|\nabla F_i(\mathbf{w}_{i,t})\|^2,\label{eq:scale1}
\\
&\underline{q}^p\underline{u}^p\|\nabla F_i(\mathbf{w}_{i,t})\|^2\leq {\color{black}(\mathbf{v}^p_{i,t})^T \nabla F_i(\mathbf{w}_{i,t})}=\|\mathbf{v}^p_{i,t}\|\|\nabla F_i(\mathbf{w}_{i,t})\|\cos \theta^p_{i,t}\leq \overline{u}^p\overline{q}^p\|\nabla F_i(\mathbf{w}_{i,t})\|^2,\label{eq:scale2}
\\
&\underline{q}^g\underline{u}^g\|\nabla F_i(\mathbf{w}_{i,t})\|^2\leq {\color{black}(\mathbf{v}^g_{t})^T\nabla F_i(\mathbf{w}_{i,t})}=\|\mathbf{v}^g_{t}\|\|\nabla F_i(\mathbf{w}_{i,t})\|\cos \theta^g_{t}\leq \overline{u}^g\overline{q}^g\|\nabla F_i(\mathbf{w}_{i,t})\|^2.\label{eq:scale3}
\end{align}

Substituting \eqref{eq:scale1}-\eqref{eq:scale3} to \eqref{eq:smooth0}, we have
\begin{align}
F_i(\mathbf{w}_{i,t+1})&-F_i(\mathbf{w}_{i,t})
 \leq {\color{black}(c_0-c_1-c_2)}\underline{q}\underline{u}\|\nabla F_i(\mathbf{w}_{i,t})\|^2+c_1 \overline{u}^p\overline{q}^p\|\nabla F_i(\mathbf{w}_{i,t})\|^2
 \nonumber\\
 &+ c_2 \overline{u}^g\overline{q}^g\|\nabla F_i(\mathbf{w}_{i,t})\|^2- \alpha\|\nabla F_i(\mathbf{w}_{i,t})\|^2+\frac{L}{2}\|\mathbf{v}_{i,t+1}\|^2
 \nonumber\\
 &=(c_1 \overline{u}^p\overline{q}^p+c_2 \overline{u}^g\overline{q}^g{\color{black}+(c_0-c_1-c_2)}\underline{q}\underline{u}-\alpha)\|\nabla F_i(\mathbf{w}_{i,t})\|^2+\frac{L}{2}\|\mathbf{v}_{i,t+1}\|^2
 .
 \label{eq:smooth1}
\end{align}

Applying the triangle inequality of norms $\| \mathbf{X}+\mathbf{Y}\| \leq \| \mathbf{X}\|+\| \mathbf{Y}\|$, the submultiplicative property of norms $\| \mathbf{X}\mathbf{Y}\| \leq \| \mathbf{X}\|\| \mathbf{Y}\|$, and the Jensen’s inequality $(\sum_{i=1}^{n}a_i)^2\leq n\sum_{i=1}^{n}a_i^2$, we have
\begin{align}
\|\mathbf{v}_{i,t+1}\|^2&=\|{\color{black}(c_0-c_1-c_2) \mathbf{v}_{i,t}+c_1\mathbf{v}_{i,t}^p+c_2 \mathbf{v}_{t}^g-\alpha\nabla F_i(\mathbf{w}_{i,t})}\|^2
 \nonumber\\
 &\leq (\|{\color{black}(c_0-c_1-c_2)} \mathbf{v}_{i,t}\|+\|c_1\mathbf{v}_{i,t}^p\|+\|c_2 \mathbf{v}_{t}^g\|+\|\alpha\nabla F_i(\mathbf{w}_{i,t})\|)^2
 \nonumber\\
 &\leq 4({\color{black}(c_0-c_1-c_2)}^2\| \mathbf{v}_{i,t}\|^2+c_1^2\|\mathbf{v}_{i,t}^p\|^2+c^2_2\| \mathbf{v}_{t}^g\|^2+\alpha^2\|\nabla F_i(\mathbf{w}_{i,t})\|^2).
 \label{eq:jesen}
\end{align}

According to the assumptions of $\overline{u}$, $\overline{u}^p$, $\overline{u}^g$ in \eqref{eq:denormscope1}-\eqref{eq:denormscope}, for any $i$ and $t$, we have
\begin{align}
\|\mathbf{v}_{i,t}\|\leq \overline{u}\|\nabla F_i(\mathbf{w}_{i,t})\|,\label{eq:scale4}\\
\|\mathbf{v}^p_{i,t}\|\leq \overline{u}^p\|\nabla F_i(\mathbf{w}_{i,t})\|,\label{eq:scale5}\\
\|\mathbf{v}^g_{t}\|\leq \overline{u}^g\|\nabla F_i(\mathbf{w}_{i,t})\|.\label{eq:scale6}
\end{align}

Substituting \eqref{eq:scale4}-\eqref{eq:scale6} to \eqref{eq:jesen}, we have
\begin{align}
\|\mathbf{v}_{i,t+1}\|^2 &\leq 4({\color{black}(c_0\overline{u}-c_1\overline{u}-c_2\overline{u})}^2\|\nabla F_i(\mathbf{w}_{i,t})\|^2+c_1^2(\overline{u}^p)^2\|\nabla F_i(\mathbf{w}_{i,t})\|^2\nonumber\\
&+c^2_2(\overline{u}^g)^2\|\nabla F_i(\mathbf{w}_{i,t})\|^2+\alpha^2\|\nabla F_i(\mathbf{w}_{i,t})\|^2)\nonumber\\
&=4({\color{black}(c_0\overline{u}-c_1\overline{u}-c_2\overline{u})}^2+c_1^2(\overline{u}^p)^2+c^2_2(\overline{u}^g)^2+\alpha^2)\|\nabla F_i(\mathbf{w}_{i,t})\|^2.
 \label{eq:jesen1}
\end{align}

Substituting \eqref{eq:jesen1} to \eqref{eq:smooth1}, we have
\begin{align}
F_i(\mathbf{w}_{i,t+1})-F_i(\mathbf{w}_{i,t})
 &\leq( c_1 \overline{u}^p\overline{q}^p+c_2 \overline{u}^g\overline{q}^g {\color{black}+(c_0-c_1-c_2)}\underline{q}\underline{u}-\alpha)\|\nabla F_i(\mathbf{w}_{i,t})\|^2\nonumber\\
 &+2L({\color{black}(c_0\overline{u}-c_1\overline{u}-c_2\overline{u})}^2+c_1^2(\overline{u}^p)^2 +c^2_2(\overline{u}^g)^2+\alpha^2)\|\nabla F_i(\mathbf{w}_{i,t})\|^2\nonumber\\
 &=\Phi\|\nabla F_i(\mathbf{w}_{i,t})\|^2,
 \label{eq:smooth2}
\end{align}
where $\Phi= c_1 \overline{u}^p\overline{q}^p+c_2 \overline{u}^g\overline{q}^g{\color{black}+(c_0-c_1-c_2)}\underline{q}\underline{u} -\alpha+2L({\color{black}(c_0\overline{u}{-}c_1\overline{u}{-}c_2\overline{u})}^2+c_1^2(\overline{u}^p)^2 +c^2_2(\overline{u}^g)^2+\alpha^2)$.

Then we extend the expectation over randomness introduced by CB-DSL and mini-batch training data in the trajectory of iterations, and perform a telescoping sum  of \eqref{eq:smooth2} over the $T$ iterations
\begin{align}\label{sumexpectation}
F(\mathbf{w}_{i,0})-F(\mathbf{w}^*)&\geq F(\mathbf{w}_{i,0})-\mathbb{E}[F(\mathbf{w}_{i,T})]
\nonumber \\
&
=\mathbb{E}\left[\sum_{t=1}^{T}(F(\mathbf{w}_{i,t-1})-F(\mathbf{w}_{i,t}))\right]
\nonumber \\
&
\geq\mathbb{E}\left[ \sum_{t=1}^{T}\Phi_E\|\nabla F_i(\mathbf{w}_{i,t})\|^2\right],
\end{align}
where $\Phi_E=\mathbb{E}[-\Phi]=-\frac{\delta_{c_1}}{2} \overline{u}^p\overline{q}^p-\frac{\delta_{c_2}}{2} \overline{u}^g\overline{q}^g{\color{black}-\frac{2c_0-\delta_{c_1}-\delta_{c_2}}{2}}\underline{q}\underline{u}+\alpha -2L((c_0^2{\color{black}-\delta_{c_1}c_0-\delta_{c_2}c_0}+\frac{\delta_{c_1}^2}{3}+\frac{\delta_{c_2}^2}{3} +\frac{\delta_{c_1}\delta_{c_2}}{2})\overline{u}^2 +\frac{\delta_{c_1}^2}{3}(\overline{u}^p)^2+\frac{\delta_{c_2}^2}{3} (\overline{u}^g)^2+\alpha^2)$.

Finally, we can rearrange the inequality of \eqref{sumexpectation} to yield the convergence rate
\begin{align}\label{sumexpectation1}
\mathbb{E}\left[\sum_{t=1}^{T}\frac{\|\nabla F_i(\mathbf{w}_{i,t})\|^2}{T}\right]\leq \frac{F(\mathbf{w}_{i,0})-F(\mathbf{w}^*)}{T\Phi_E}.
\end{align}

Hence, the proof is completed.
%
\end{proof}

\section{Proof of \textbf{Theorem \ref{theo:theorem2}}}\label{Appendix_B}

\begin{proof}
Based on the definitions of $\mathbf{w}_{i,t+1}$ and $\mathbf{w}^g_{t+1}$ in \eqref{eq:DSL_w} and \eqref{eq:weigh}, we have
\begin{align}
\|\mathbf{w}_{i,t+1}-\mathbf{w}^g_{t+1}\|&={\color{black}\|\mathbf{w}_{i,t}-\mathbf{w}^g_{t}
+\mathbf{v}_{i,t+1}-\mathbf{v}^g_{t+1}}\|
\nonumber\\
&\leq \|{\color{black}\mathbf{w}_{i,t}}-\mathbf{w}^g_{t}\|+\|\mathbf{v}_{i,t+1}-\mathbf{v}^g_{t+1}\|.\label{eq:derwe}
\end{align}

Then based on the definitions of $\mathbf{v}_{i,t+1}$ and $\mathbf{v}^g_{t+1}$ in \eqref{eq:DSL_velocity} and \eqref{eq:spgen}, we get
\begin{align}
\|\mathbf{v}_{i,t+1}-\mathbf{v}^g_{t+1}\|
&=\|{\color{black}(c_0-c_1-c_2)} \mathbf{v}_{i,t}{\color{black}\,+\,c_1}\mathbf{v}_{i,t}^p {\color{black}\,+\,(c_2-c_0)} \mathbf{v}_{t}^g
{\color{black} \,-\, \alpha}\nabla F_i(\mathbf{w}_{i,t}){ \color{black}\, +\,\alpha}\nabla F(\mathbf{w}^g_{t})\|\nonumber\\
&\leq \|{\color{black}(c_0-c_1-c_2)} \mathbf{v}_{i,t}{\color{black}\,+\,c_1}\mathbf{v}_{i,t}^p{\color{black}\,+\,(c_2-c_0)} \mathbf{v}_{t}^g\|
+\|\alpha\nabla F_i(\mathbf{w}_{i,t})-\alpha\nabla F(\mathbf{w}^g_{t})\|\nonumber\\
&\leq \|{\color{black}(c_0-c_1-c_2)} \mathbf{v}_{i,t}{\color{black}\,+\,c_1}\mathbf{v}_{t}^g{\color{black}\,+\,(c_2-c_0)} \mathbf{v}_{t}^g\|
+\|\alpha\nabla F_i(\mathbf{w}_{i,t})-\alpha\nabla F(\mathbf{w}^g_{t})\|\nonumber\\
&{\color{black}= |c_0-c_1-c_2|}\| \mathbf{v}_{i,t}-\mathbf{v}_{t}^g\|
+\alpha\|\nabla F_i(\mathbf{w}_{i,t})-\nabla F(\mathbf{w}^g_{t})\|.\label{eq:dersp}
\end{align}

Given the definitions of gradients at each local workers and the genie worker $
\nabla F_i(\mathbf{w}_{i,t})= \sum_{c=1}^Cp_i(y=c)\nabla\mathbb{E}_{\mathbf{x}|y=c}[f_c(\mathbf{x},\mathbf{w}_{i,t})]
$
and
$
\nabla F(\mathbf{w}^g_{t})= \sum_{c=1}^Cp(y=c)\nabla\mathbb{E}_{\mathbf{x}|y=c}[f_c(\mathbf{x},\mathbf{w}^g_{t})],
$ respectively,
we have
\begin{align}
&\|\nabla F_i(\mathbf{w}_{i,t})-\nabla F(\mathbf{w}^g_{t})\|
\nonumber\\
&=\|\sum_{c=1}^Cp_i(y=c)\nabla\mathbb{E}_{\mathbf{x}|y=c}[f_c(\mathbf{x},\mathbf{w}_{i,t})]
-\sum_{c=1}^Cp(y=c)\nabla\mathbb{E}_{\mathbf{x}|y=c}[f_c(\mathbf{x},\mathbf{w}^g_{t})]\|
\nonumber\\
&= \|\sum_{c=1}^Cp_i(y=c)\nabla\mathbb{E}_{\mathbf{x}|y=c}[f_c(\mathbf{x},\mathbf{w}_{i,t})]
-\sum_{c=1}^Cp_i(y=c)\nabla\mathbb{E}_{\mathbf{x}|y=c}[f_c(\mathbf{x},\mathbf{w}^g_{t})]
\nonumber\\
&+\sum_{c=1}^Cp_i(y=c)\nabla\mathbb{E}_{\mathbf{x}|y=c}[f_c(\mathbf{x},\mathbf{w}^g_{t})]
-\sum_{c=1}^Cp(y=c)\nabla\mathbb{E}_{\mathbf{x}|y=c}[f_c(\mathbf{x},\mathbf{w}^g_{t})]\|
\nonumber\\
&\leq \|\sum_{c=1}^Cp_i(y=c)(\nabla\mathbb{E}_{\mathbf{x}|y=c}[f_c(\mathbf{x},\mathbf{w}_{i,t})]
-\nabla\mathbb{E}_{\mathbf{x}|y=c}[f_c(\mathbf{x},\mathbf{w}^g_{t})])\|
\nonumber\\
&+\|\sum_{c=1}^C(p_i(y=c)-p(y=c))\nabla\mathbb{E}_{\mathbf{x}|y=c}[f_c(\mathbf{x},\mathbf{w}^g_{t})]\|.
\label{eq:derGra}
\end{align}

Letting $f_{max}(\mathbf{w}^g_{t})=\max\{\nabla\mathbb{E}_{\mathbf{x}|y=c}[f_c(\mathbf{x},\mathbf{w}^g_{t})]\}_{c=1}^C$,
and applying the Lipschitz continuity, the equality of \eqref{eq:derGra} can be further rewritten as
\begin{align}
&\|\nabla F_i(\mathbf{w}_{i,t})-\nabla F(\mathbf{w}^g_{t})\|
\nonumber\\
&\leq\sum_{c=1}^Cp_i(y=c)L_c\|\mathbf{w}_{i,t}
-\mathbf{w}_{t}\|
+f_{max}(\mathbf{w}^g_{t})\sum_{c=1}^C\|(p_i(y=c)-p(y=c))\|.\label{eq:derGra2}
\end{align}

Combining \eqref{eq:derwe}, \eqref{eq:dersp}, and \eqref{eq:derGra}, we have
\begin{align}
\|\mathbf{w}_{i,t+1}-\mathbf{w}^g_{t+1}\|
&\leq \|\mathbf{w}_{i,t}-\mathbf{w}^g_{t}\|+\|\mathbf{v}_{i,t+1}-\mathbf{v}^g_{t+1}\|
\nonumber\\
&\leq \|\mathbf{w}_{i,t}-\mathbf{w}^g_{t}\|+{\color{black}|c_0-c_1-c_2|}\| \mathbf{v}_{i,t}-\mathbf{v}_{t}^g\|
+\alpha\|\nabla F_i(\mathbf{w}_{i,t})-\nabla F(\mathbf{w}^g_{t})\|
\nonumber\\
&\leq \|\mathbf{w}_{i,t}-\mathbf{w}^g_{t}\|+{\color{black}|c_0-c_1-c_2|}\| \mathbf{v}_{i,t}-\mathbf{v}_{t}^g\|
+\alpha \sum_{c=1}^Cp_i(y=c)L_c\|\mathbf{w}_{i,t}
-\mathbf{w}^g_{t}\|
\nonumber\\
&
+\alpha f_{max}(\mathbf{w}^g_{t})\sum_{c=1}^C\|p_i(y=c)-p(y=c)\|
\nonumber\\
&=\left(1+\alpha \sum_{c=1}^Cp_i(y=c)L_c\right)\|\mathbf{w}_{i,t}
-\mathbf{w}^g_{t}\|+{\color{black}|c_0-c_1-c_2|}\| \mathbf{v}_{i,t}-\mathbf{v}_{t}^g\|
\nonumber\\
&
+\alpha f_{max}(\mathbf{w}^g_{t})\sum_{c=1}^C\|p_i(y=c)-p(y=c)\|.\label{eq:derco}
\end{align}

Letting $\beta=1+\alpha \sum_{c=1}^Cp_i(y=c)L_c$, we rewrite \eqref{eq:derco} as
\begin{align}
\|\mathbf{w}_{i,t+1}-\mathbf{w}^g_{t+1}\|
&\leq \beta\|\mathbf{w}_{i,t}
-\mathbf{w}^g_{t}\|+{\color{black}|c_0-c_1-c_2|}\| \mathbf{v}_{i,t}-\mathbf{v}_{t}^g\|
\nonumber\\
&
+\alpha f_{max}(\mathbf{w}^g_{t})\sum_{c=1}^C\|p_i(y=c)-p(y=c)\|
\nonumber\\
&\leq \beta^2\|\mathbf{w}_{i,t-1}
-\mathbf{w}^g_{t-1}\|+\beta {\color{black}|c_0-c_1-c_2|}\| \mathbf{v}_{i,t-1}-\mathbf{v}_{t-1}^g\|
\nonumber\\
&
+\beta\alpha f_{max}(\mathbf{w}^g_{t})\sum_{c=1}^C\|p_i(y=c)-p(y=c)\|+{\color{black}|c_0-c_1-c_2|}\| \mathbf{v}_{i,t}-\mathbf{v}_{t}^g\|
\nonumber\\
&
+\alpha f_{max}(\mathbf{w}^g_{t})\sum_{c=1}^C\|p_i(y=c)-p(y=c)\|
\nonumber\\
&
\leq \beta^{t+1}\|\mathbf{w}_{i,0}
-\mathbf{w}^g_{0}\|+{\color{black}|c_0-c_1-c_2|}\sum_{j=0}^t\beta^{t-j}\| \mathbf{v}_{i,j}-\mathbf{v}_{j}^g\|
\nonumber\\
&
+\alpha \sum_{c=1}^C\|p_i(y=c)-p(y=c)\|\sum_{j=0}^tf_{max}(\mathbf{w}^g_{j}).
\end{align}

Hence, the proof is completed.
\end{proof}
\end{appendices}
\bibliographystyle{IEEEtran}
\bibliography{ref}
\end{document}